\theoremstyle{plain}
\newtheorem{theorem}{Theorem}[section]
\newtheorem{proposition}[theorem]{Proposition}
\newtheorem{lemma}[theorem]{Lemma}
\theoremstyle{definition}
\newtheorem{definition}[theorem]{Definition}
\newtheorem{remark}[theorem]{Remark}
\newtheorem{example}[theorem]{Example}
\newtheorem{assumption}[theorem]{Assumption}
\newtheorem{notation}[theorem]{Notation}
\theoremstyle{remark}
\renewenvironment{thebibliography}[1]{%
\begin{oldthebibliography}{#1}%
\setlength{\baselineskip}{.9em}
\linespread{1}
\small
\setlength{\parskip}{0.3ex}%
\setlength{\itemsep}{.5em}%
}%
{%
\end{oldthebibliography}%
}
\newcommand{\E}{\mathbb{E}}
\newcommand{\F}{\mathbb{F}}
\newcommand{\I}{\mathbb{I}}
\newcommand{\N}{\mathbb{N}}
\renewcommand{\P}{\mathbb{P}}
\newcommand{\Q}{\mathbb{Q}}
\newcommand{\R}{\mathbb{R}}
\newcommand{\cA}{\mathcal{A}}
\newcommand{\cC}{\mathcal{C}}
\newcommand{\cD}{\mathcal{D}}
\newcommand{\cF}{\mathcal{F}}
\newcommand{\cG}{\mathcal{G}}
\newcommand{\cK}{\mathcal{K}}
\newcommand{\cL}{\mathcal{L}}
\newcommand{\cX}{\mathcal{X}}
\newcommand{\cZ}{\mathcal{Z}}
\newcommand{\ov}{\overline}
\numberwithin{equation}{section}
\begin{document}

\title{\vspace{-1em}
Supermartingale Deflators\\ in the Absence of a Num\'{e}raire
\date{}
\author{
	Philipp Harms
	\thanks{Abteilung f\"ur Mathematische Stochastik, Albert-Ludwig Universit\"at Freiburg, \texttt{philipp.harms@stochastik.uni-freiburg.de}.}
	\and 
  Chong Liu
  \thanks{
  Mathematical Institute, University of Oxford, \texttt{chong.liu@maths.ox.ac.uk}.
  }
  \and
  Ariel Neufeld
   \thanks{
   Division of Mathematical Sciences, NTU Singapore, \texttt{ariel.neufeld@ntu.edu.sg}.
   }
   }
}
\maketitle \vspace{-1.2em}

\begin{abstract}
In this paper we study arbitrage theory of financial markets in the absence of a num\'eraire both in discrete and continuous time. 
In our main results, we provide a generalization of the classical equivalence between no unbounded profits with bounded risk (NUPBR) and the existence of a supermartingale deflator.
To obtain the desired results, we introduce a new approach based on disintegration of the underlying probability space into spaces where the market crashes at deterministic times.
\end{abstract}

\vspace{.9em}

{\small
\noindent \emph{Keywords:} supermartingale deflator; absence of a num\'eraire; NUPBR;\\ fundamental theorem of asset pricing; arbitrage of the first kind.

\noindent \emph{AMS 2010 Subject Classification:}
60G48; 
91B70; 
91G99. 
}

\section{Introduction}\label{sec:intro}

\paragraph*{Overview.}
A nearly universal assumption in the arbitrage theory of financial markets is the existence of a num\'eraire, i.e., the existence of a strictly positive traded asset. 
For instance, this assumption underlies the celebrated fundamental theorems of Delbaen, Schachermayer, Kabanov, and Kardaras \cite{DelbaenSchachermayer.98, Kabanov97, Kardaras.12}.
In practice, however, it is not always reasonable to make this assumption. 
Indeed, it may well happen in the presence of credit or systemic risk that all assets under consideration default in finite time. 
For instance, when a country defaults on its debt and issues a new currency, this devalues not only the domestic bond market but also its num\'eraire.
Markets with arbitrarily low negative interest rates are approximations of this situation. 
Moreover, a similar situation occurs in financial models where the assets within a defaultable market segment are quoted in terms of an index or market average, as can be reasonable in portfolio optimization or hedging.
Motivated by these examples, the purpose of this work is to study arbitrage theory of financial markets in the absence of a num\'eraire.

Our main result is a generalization of the classical equivalence between no unbounded profits with bounded risk (NUPBR) and the existence of a supermartingale deflator \cite{Kardaras.13.2}. 
NUPBR is a pivotal notion in arbitrage theory and a minimal requirement for reasonable financial models \cite{DelbaenSchachermayer.94, DelbaenSchachermayer.98, Karatzas.07, Kardaras.12, Kardaras.13.2}. 
It also plays a fundamental role in defining path-wise stochastic integrals in model-free finance \cite{Perkowski.16}.
To put it into context, at least for markets with num\'eraire, Delbaen and Schachermayer's condition of no free lunch with vanishing risk (NFLVR) is equivalent to NUPBR together with no-arbitrage (NA) \cite{DelbaenSchachermayer.94}. 
However, there are many reasonable models such as the three-dimensional Bessel process which satisfy NUPBR but violate NA.
Moreover, NUPBR is all that is needed for ensuring that expected utility maximization is well defined, and the maximizer is precisely the desired supermartingale deflator \cite{Karatzas.07}.
We show that a similar result holds for markets without num\'eraire.
Namely, in discrete time, NUPBR remains equivalent to the existence of a supermartingale deflator. 
In continuous time, this equivalence holds under an independence assumption on the time of the market crash. 
Without this independence assumption, the equivalence holds subject to additional boundedness conditions on the market, which can be rephrased equivalently as boundedness conditions on the deflator.
It remains open to what extent these additional conditions are really necessary.

\paragraph*{Arbitrage theory without num\'eraire.}
The construction of the supermartingale deflator in \cite{Kardaras.13.2} via maximization of the expected $\log$--utility presupposes the existence of a num\'eraire to ensure that the maximization problem is well--defined. 
This is not merely a shortcoming of the proof but turns out to be a fundamental problem, which requires several adaptations of classical definitions and arguments, as outlined next. 

First, the notion of NUPBR is too weak.
Recall that NUPBR is defined as boundedness in probability of the payoffs at the terminal time $T$. 
When there is a num\'eraire, this implies boundedness in probability of the payoffs at all intermediate times $t<T$, see \cite{Kardaras.13.2}. 
However, in the absence of a num\'eraire, the payoffs at intermediate times may be unbounded in probability, as e.g.\@ in Example~\ref{ex:NUPBR-C-T-NOT}, and this rules out the existence of a strictly positive supermartingale deflator. 

Second, the notion of fork convexity (also known as switching property) is too weak. According to the classical definition, fork convexity allows an agent to switch from any given asset to any other strictly positive asset. However, markets without num\'eraire may not contain any strictly positive asset at all. In this case fork convexity is trivially satisfied. The correct modification is to allow the agent to switch to a new asset contingent on the new asset being positive at the given time and state of nature, as spelled out in Definition~\ref{def:fork-convexity}.

Third, the following argument, which is crucial for the construction of a deflator in \cite[Theorem~2.3]{Kardaras.13.2}, breaks down: if the terminal payoff $X_T$ of an asset $X$ is optimal within the set of all terminal payoffs, then the payoff $X_t$ is optimal within the set of all payoffs at time $t$, for any intermediate time $t<T$. For example, this clearly does not hold on markets where all the terminal payoffs vanish identically. Additionally, some arguments in \cite{Kardaras.13.2} concerning the regularization of generalized supermartingales break down because they also rely on the existence of a num\'eraire.  

\newpage
\paragraph*{The time where the market crashes.}

Methodologically, this work relies heavily on an analysis of the first time $\tau$ where all assets in the market vanish or, more succinctly, the time $\tau$ where the market crashes. 
Loosely speaking, one may partition the scenario space $\Omega$ into disjoint subsets $\Omega_t$ where $\tau$ is constant and equal to $t$. 
On each slice $\Omega_t$, there exists a process which is strictly positive up to time $t$ and therefore can serve as a num\'eraire. 
Thus, one obtains under the classical conditions of \cite{Kardaras.13.2} a supermartingale deflator $Z^t$ on each space $\Omega_t$ endowed with the conditional probability measure. 
These local deflators $Z^t$ on $\Omega_t$ can then be pasted into a global deflator $Z$ on $\Omega$. 

This sketch can be turned rather directly into a rigorous proof if $\tau$ has countable support; see Theorem~\ref{theorem: second main theorem}.
Otherwise, the conditional probabilities (provided they exist) may be singular with respect to $\P$, and consequently NUPBR on $\Omega$ does not entail NUPBR on $\Omega_t$. 
To overcome these issues, we discretize time into a finite dyadic grid of $2^n$ intervals and apply the above pasting method there. 
This produces a strictly positive supermartingale deflator on the grid.
Passing to the limit $n\to\infty$ while preserving the strict positivity is the most important and difficult part of the paper. 
This requires good lower bounds on the deflators or, equivalently, good upper bounds on the assets.

\paragraph*{Previous literature.}
Previously, arbitrage theory for markets without num\'eraires has been studied only in finite discrete time by Tehranchi \cite{Tehranchi14}. 
However, recently a related preprint of B\'alint \cite{Balint20} on continuous-time markets, based on research independent of ours, has appeared.
The philosophy in \cite{Balint20}, as well as in the present paper, is as follows: one first localizes the market, then constructs local deflators on each localized piece, and finally pastes them together to get a global deflator. The essential difference is that the localization in \cite{Balint20} is performed in time, i.e., the terminal date $T$ is approximated from below by a sequence of stopping times $(T^n)_{n \ge 1}$ such that on each time horizon $[0,T^n)$ the market contains a num\'eraire.
In contrast, the methodology of this paper is to localize the market in ``space'' in the sense that the sample space $\Omega$ is partitioned into different parts such that there exists a num\'eraire under the corresponding conditional measure. Then it is intuitively clear that the approach in \cite{Balint20} works very well in continuous time but fails if the underlying processes are non-adapted to the given filtration; on the other hand, our techniques can handle non-adaptedness (in particular for finite discrete time markets) but need more technical assumptions to work in continuous time. Hence, we believe that both approaches can provide alternative and complementary perspectives for future research. Besides, we also give an explicit formula for constructing a deflator (see Theorems \ref{thm:discrete}, \ref{theorem: second main theorem} and \ref{theorem: third main theorem}), while the results in \cite{Balint20} are elegant but rather abstract.

\paragraph*{Structure of the paper.}
This paper is organized in the following way. In Section~\ref{sec:main} we introduce the setup, notations, and main results. In Section~\ref{subsec:proof-discrete} we prove the first main result in finite discrete time. Note that Tehranchi \cite{Tehranchi14} also proved a similar result, but our approach is quite different and provides an alternative perspective. In Section~\ref{subsec:proof-continuous} we consider markets in continuous time and find an
 equivalence 
condition for the existence of a supermartingale deflator.

\section{Setup and main results} \label{sec:main}
We fix a finite time horizon $T \in (0,\infty)$ and a filtered probability space $(\Omega, \cF, \F, \P)$. Moreover we let $\mathbb{I}\subseteq [0,T]$ be either $\I:=\{0,1,\dots, T\}$ for the discrete-time setting or $\mathbb{I}:= [0,T]$ for the continuous-time setting. 
Throughout the paper, we will use the following notation.
\begin{notation}
$ \mbox{}$ \\
$\bullet$ If not specified differently, every property of a random variable or a stochastic process such as, e.g., strict positivity, c\`{a}dl\`{a}g paths,... is understood to hold $\P$--a.s..\\
$\bullet$ We mean by a stochastic process $(X_t)_{t\in \I}$ simply a collection of $\cF$-measurable random variables. \\
$\bullet$  For any measure $\Q$ on $(\Omega,\cF)$ we denote  by $L^0(\Q)$ the set of all (equivalence classes of) random variables, which we endow with the metric which induces convergence in $\Q$-probability. Moreover, we denote by $L^0_+(\Q)\subseteq L^0(\Q)$ the set of nonnegative random variables and by $L^0_{++}(\Q)\subseteq L^0_+(\Q)$ the set of strictly positive random variables $X$ in the sense that $\Q[X>0]=1$. \\
$\bullet$ We call a set $\cC \subseteq L^0(\Q)$ to be \textit{$\Q$-bounded} or \textit{bounded in $L^0(\Q)$} if it is bounded in probability with respect to $\Q$, namely
\begin{equation*}
\lim_{M \to \infty} \sup_{X \in \cC} \Q\big[|X|\geq M\big]=0.
\end{equation*}
$\bullet$  Following \cite{Zitkovic.09} we say that a set $\cC\subseteq L^0_+(\Q)$ is \textit{$\Q$-convex compact} or \textit{convexly compact} if it is convex, closed, and $\Q$-bounded.\\
$\bullet$  Following \cite{Kardaras.13.2}, we say that a stochastic process $(X_t)_{t\in [0,T]}$ defined on $[0,T]$ is \textit{$\Q$-c\`adl\`ag} if the  mapping $[0,T]\ni t \mapsto X_t \in L^0(\Q)$ is right-continuous and has left-limits.
\end{notation}
\begin{definition}\label{def:wealth-set}
In the discrete-time setting, we call a collection of nonnegative processes, denoted by $\mathcal{X}$, a \textit{wealth process set} or \textit{market} on $\{0,1,\dots,T\}$ if it satisfies the following two conditions:
\begin{enumerate}
	\item Each $X\in \cX$  satisfies $X_0 = 1$,
	\item  for each $X \in \mathcal{X}$ we have that $X$ vanishes on the stochastic interval $[\![\tau^X,T]\!]$, where $\tau^X := \inf\{t \in \{0,1,\dots, T\} \mid X_t = 0\}$ with the convention $\inf \emptyset := \infty$.
\end{enumerate}	
In the continuous-time setting, we call a  collection of nonnegative  processes $\mathcal{X}$  a \textit{wealth process set} or \textit{market} on $[0,T]$ if it satisfies the following two conditions:
\begin{enumerate}
	\item Each $X\in \cX$  has c\`{a}dl\`{a}g paths and satisfies  $X_0 = 1$,
	\item  for each $X \in \mathcal{X}$ we have that  $X$ vanishes on the stochastic interval $[\![\tau^X,T]\!]$, where $\tau^X := \inf\{t \in [0,T] \mid X_t = 0 \text{ or } X_{t-} = 0\}$ with the convention $\inf \emptyset := \infty$ for all $X \in \mathcal{X}$.
\end{enumerate}
Furthermore, a wealth process set $\cX$ is called $\F$-adapted if each $X\in \cX$ is an $\F$-adapted process.
\end{definition}
\begin{definition}\label{def:numeraire}
	We call an element $X^{num}\in \cX$ a \emph{num\'eraire} for the market $\cX$ if $X^{num}_t$ is strictly positive for all time $t$.
\end{definition}
Our goal of this paper is to analyze markets which do not necessarily contain a num\'eraire, both in the case where the market $\cX$ is $\F$-adapted or not.

In the spirit of \cite{Zitkovic.02,Kardaras.13.2}, 
we introduce a notion of generalized fork convexity for wealth process sets.
\begin{definition}\label{def:fork-convexity}
We say that a wealth process set $\mathcal{X}$ defined on $\mathbb{I}$ satisfies the \textit{generalized fork convexity} if the following two conditions hold:
\begin{enumerate}
\item $\mathcal{X}$  is convex, i.e., $\lambda X^1 + (1 - \lambda)X^2 \in \mathcal{X}$ for any $\lambda \in [0,1]$, $X^1, X^2$ in $\mathcal{X}$,
\item for any $X^1,X^2,X^3$ in $\mathcal{X}$, $s \in \I$, and $ A \in \cF_s$, the process defined by
\begin{align}\label{eq: fork convexity in no numeraire case}
X_t := X^1_t \mathbbm{1}_{\{t < s\}} +\bigg[&\mathbbm{1}_A\Big(\big(\tfrac{X^2_t}{X^2_s}X^1_s\big)\mathbbm{1}_{\{X^2_s>0\}} + X^1_t\mathbbm{1}_{\{X^2_s=0\}}\Big)  
\nonumber\\
&
+\mathbbm{1}_{A^c}\Big(\big(\tfrac{X^3_t}{X^3_s}X^1_s\big)\mathbbm{1}_{\{X^3_s>0\}} + X^1_t\mathbbm{1}_{\{X^3_s=0\}}\Big)\bigg]\mathbbm{1}_{\{t \geq s\}}, \quad \quad t \in \I,
\end{align}
belongs to $\mathcal{X}$. 
\end{enumerate}
\end{definition}
In words, the generalized fork convexity means that the agent on this market will switch to another portfolio at time $t$ only when the wealth process associated to the new portfolio has a positive value at this instant, otherwise she will keep her original position.
\begin{remark}\label{remark: compare with the classical fork convexity}
	We  point out that our notion of fork convexity is slightly more general than the usual one introduced by {\v{Z}}itkovi{\'c} \cite{Zitkovic.02}
	and 
	also used in Karadaras \cite{Kardaras.13.2}, even if the market $\cX$ possesses a num\'eraire. 
	More precisely, in the notion of {\v{Z}}itkovi{\'c} \cite{Zitkovic.02}, 
	the switched portfolios $X^2$ and $X^3$  in \eqref{eq: fork convexity in no numeraire case} have to be strictly positive.
	Since in our work, we analyze markets which may not contain a num\'eraire, we believe that our slight generalization of fork convexity is the natural extension in that setting.
	To justify our notion, we observe that in the presence of a num\'eraire, the property for a market to satisfy NUPBR, meaning that the final value set $\cC_T:=\{X_T\colon X \in \cX \}$ is $\P$-bounded, does not depend on the choice of the definition of the fork convexity (between the one of {\v{Z}}itkovi{\'c} \cite{Zitkovic.02} and ours). More precisely, we have in Lemma~\ref{le:remark: compare with the classical fork convexity} the following result, whose proof we provide in the appendix:
\end{remark}
\begin{lemma}
	\label{le:remark: compare with the classical fork convexity}
	Let $\cX$ be a market which is {$\mathbb{F}$-adapted} and contains a num\'eraire and assume that it is fork convex in the sense of {\v{Z}}itkovi{\'c} \cite{Zitkovic.02}. Then the market $\cX$ satisfies the NUPBR condition if and only if its fork convex hull taken with respect to our notion (see Definition~\ref{def:fork-convexity}) satisfies the NUPBR condition.
\end{lemma}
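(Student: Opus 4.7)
The plan is to prove both directions by transferring a strictly positive supermartingale deflator between $\cX$ and its (algebraic) new fork convex hull, which I denote $\cX^*$. The easy direction, $\cX^*$ satisfies NUPBR $\Rightarrow$ $\cX$ satisfies NUPBR, is immediate from the inclusion $\cX\subseteq\cX^*$, which yields $\{X_T:X\in\cX\}\subseteq\{X_T:X\in\cX^*\}$ and so $\P$-boundedness is inherited.

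For the converse, I would first invoke the classical theorem of Kardaras~\cite[Theorem~2.3]{Kardaras.13.2} applied to $\cX$: since $\cX$ is $\F$-adapted, convex, {\v{Z}}itkovi{\'c}-fork-convex, contains a num\'eraire, and satisfies NUPBR, there exists a strictly positive c\`adl\`ag supermartingale $Z$ with $Z_0=1$ such that $ZX$ is a supermartingale for every $X\in\cX$. The central step is then to show by induction on the depth of iterated convex combinations and new-fork operations that $Z$ remains a supermartingale deflator for every $X\in\cX^*$. Convex combinations trivially preserve the supermartingale property, so the substantive task is the new-fork step: given $X^1,X^2,X^3\in\cX^*$ whose deflated versions $ZX^i$ are supermartingales, $s\in\I$, and $A\in\cF_s$, I would partition the sample space into the four $\cF_s$-measurable events $A\cap\{X^2_s>0\}$, $A\cap\{X^2_s=0\}$, $A^c\cap\{X^3_s>0\}$, $A^c\cap\{X^3_s=0\}$, and use $\F$-adaptedness to pull the $\cF_s$-measurable factors out of conditional expectations so that the supermartingale inequality $\E[Z_tX_t\mid\cF_u]\le Z_uX_u$ for $u\le t$ drops out of the corresponding inequalities for $ZX^1,ZX^2,ZX^3$. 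The observation that makes this clean is that on the default events $\{X^2_s=0\}$ and $\{X^3_s=0\}$ the new fork coincides with $X^1$ by construction, so the supermartingale inequality there is nothing but the one already known for $ZX^1$; on the complementary events the argument is the classical {\v{Z}}itkovi{\'c} calculation, since division by $X^2_s$ (resp.\ $X^3_s$) is then well-defined.

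Once $Z$ has been established as a strictly positive supermartingale deflator on $\cX^*$, NUPBR for $\cX^*$ follows by a standard argument: $\E[Z_TX_T]\le Z_0=1$ for every $X\in\cX^*$, and given $\delta>0$ one picks $M$ with $\P[Z_T<1/M]<\delta/2$ so that $\P[X_T>R]\le \P[Z_T<1/M]+M\E[Z_TX_T]/R<\delta$ for all sufficiently large $R$, uniformly in $X\in\cX^*$. The main obstacle to anticipate is the bookkeeping in the new-fork induction step, but it is routine precisely because the default events are exactly those on which the new operation keeps $X^1$, and elsewhere the operation is a genuine {\v{Z}}itkovi{\'c} switch for which the deflator property is already classical.
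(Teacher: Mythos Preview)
Your proposal is correct and follows essentially the same approach as the paper: invoke Kardaras' theorem to produce a strictly positive deflator $Z$ for $\cX$, then show $Z$ remains a deflator for each new-fork combination by splitting on the $\cF_s$-measurable events $\{X^2_s>0\}$, $\{X^2_s=0\}$, etc., and conclude NUPBR for the hull from the existence of $Z$. The paper carries out the computation only for $X^1,X^2,X^3\in\cX$ and defers the iteration to \cite[Proposition~3]{Zitkovic.02}, whereas you make the inductive structure on the depth of the hull explicit; otherwise the arguments coincide.
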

In the spirit of \cite{Kardaras.13.2}, we introduce the notion of a (generalized) supermartingale deflator. 

\begin{definition}
We call a nonnegative stochastic process $(Y_t)_{t\in \I}$ a \emph{generalized supermartingale} on $\I$ if 
for all $s,t \in \I$ with $s\leq t$
\begin{equation*}\label{eq:generalized supermartingale property}
	\E_{\P}\Big[\tfrac{Y_t}{Y_s}\Big|\cF_s\Big] \le 1.
\end{equation*} 
\end{definition}

\begin{definition}\label{def:deflator-discrete}
	We call a nonnegative stochastic process $(Z_t)_{t\in \I}$ a \emph{generalized supermartingale deflator} on $\I$ for $\mathcal{X}$ if $Z_0 \leq 1$ and $ZX$ is a generalized supermartingale for all $X \in \mathcal{X}$, 
	i.e.\ for all $s,t \in \I$ with $s\leq t$
	\begin{equation}\label{eq:generalized supermartingale deflator property}
	\E_{\P}\Big[\tfrac{X_tZ_t}{X_sZ_s}\Big|\cF_s\Big] \le 1.
	\end{equation}
	Moreover, when the market is $\F$-adapted, we call $(Z_t)_{t\in \I}$ a \emph{supermartingale deflator} if $(Z_t)_{t\in \I}$ is additionally $\F$-adapted.
\end{definition}
\begin{remark}\label{remark: well defined condition}
	In the above equation~\eqref{eq:generalized supermartingale deflator property} we apply the convention that $0/0 := 0$. Thanks to the property~(ii) of a wealth process set, we have $\{X_s = 0\} \subseteq \{X_t = 0\}$ for $s\leq t$, which ensures that the formulation~\eqref{eq:generalized supermartingale deflator property} is well--defined. 
	To rule out trivialities, we are interested in the existence of \textit{strictly positive} (generalized) supermartingales.
\end{remark}
In a market $\cX$ which is fork convex (in the sense of \cite{Zitkovic.02}) and possesses a num\'eraire,
Kardaras has proven in \cite[Theorem~2.3]{Kardaras.13.2} the equivalence between $\cX$ satisfying the NUPBR condition
and the existence of a strictly positive, $\P$-c\`adl\`ag generalized supermartingale deflator.
It is natural to ask the question if this equivalence also holds true for  a market $\cX$ satisfying the (generalized) fork convexity property, but which does not possess a num\'eraire. It turns out that this equivalence fails when a num\'eraire is absent, as shown in the following example.

\begin{example}\label{ex:NUPBR-C-T-NOT}
	The following market satisfies NUPBR but does not admit any strictly positive generalized supermartingale deflator. In a continuous-time setting, let $T=1$ and consider for each $n\in \N$ the deterministic process $X^n$ defined by
	 \begin{equation}\label{eq:ex:NUPBR-C-T-NOT}
	X^n_t:= \min\!\big\{1+(2n-2)t, 2n-2nt \big\}, \quad t \in [0,1].
\end{equation}
In other words, for each $n \in \N$ the process $X^n$ is linear between $1$ and $n$ on the time interval $[0,\frac{1}{2}]$ and linear between $n$ and $0$ on the time interval $[\frac{1}{2},1]$. Let $\mathcal{X}$ be the fork convex hull of all $X^n$. This market satisfies NUPBR because the $T$--value set $\mathcal{C}_T = \{0\}$ is $\P$-bounded. 
However for each $t \in (0,1)$, we have by \eqref{eq:ex:NUPBR-C-T-NOT} that $\sup_{n\in \N} X^n_t=\infty$, hence, as each $X^n$ is deterministic, 
the $t$--value set $\mathcal{C}_t$ is not $\P$-bounded. This in turn contradicts the existence of a strictly positive generalized supermartingale deflator, which would enforce the $\P$-boundedness of $\cC_t$ for all $t$.
\end{example}

\begin{remark}\label{rem:C-T-C-t-NOT}
As pointed out in \cite{Kardaras.13.2}, note that when considering a market possessing a num\'eraire which satisfies the fork convexity, the $\P$-boundedness of the final value set $\cC_T:=\{X_T\colon X \in \cX \}$ is equivalent to the $\P$-boundedness of all the intermediate value set $\cC_t:=\{X_t\colon X \in \cX \}$ for all $t$.  This equivalence may fail when there is no num\'eraire, as shown in the above Example~\ref{ex:NUPBR-C-T-NOT}.
\end{remark}
The above discussions suggest to ask whether the existence of a strictly positive generalized supermartingale deflator is equivalent to the $\P$-boundedness of $\cC_t:=\{X_t\colon X \in \cX \}$ for all $t$. The latter property is the content of the following definition.

\begin{definition}\label{def:NUPBR}
A market $\cX$ satisfies the NUPBR$_t$ condition at time $t$ if the intermediate value set $\cC_t:=\{X_t\colon X \in \cX \}$ is $\P$-bounded. In particular, the NUPBR$_T$ condition for the final time $T$ coincides with the classical NUPBR condition. 
\end{definition}

It turns out that in the discrete-time setting (see Theorem~\ref{thm:discrete}) as well as under some additional structure on the market (see Theorem~\ref{theorem: second main theorem} and Theorem~\ref{theorem: independent clock}) the equivalence indeed holds. 
Moreover, in the general setting for the continuous-time case,  we provide in our main Theorem~\ref{theorem: third main theorem} a stronger condition than the $\P$-boundedness of all $\cC_t:=\{X_t\colon X \in \cX \}$ and show that this condition is indeed equivalent to the existence of a strictly positive, (generalized) supermartingale deflator.
\begin{remark}\label{rem:static-deflator-not-enough}
	At first glance, one could guess that the $\P$-boundedness of all $\cC_t:=\{X_t\colon X \in \cX \}$  should always ensure the existence of a strictly positive generalized supermartingale deflator  for the following reason. The $\P$-boundedness of all $\cC_t$ ensures that each $\cC_t$ is convexly compact, which in turn by \cite[Theorem~1.1]{Kardaras.10.2} ensures the existence of a maximal element $\widehat{f}_t$ with respect to the preference relation $\preceq$ defined by  $f \preceq g$ if and only if $\E_{\P}[f/g] \le 1$ with the convention $0/0 = 0$. However, note that compared to the classical case where a market contains a num\'eraire, see \cite[Theorem~3.2]{Kardaras.13.2}, one cannot guarantee that the process  $(\widehat{f}_t)$ is strictly positive and hence
	the process $(\nicefrac{1}{\widehat{f}_t})$ may not form a strictly positive generalized supermartingale deflator.

	Instead, we will see later that for markets which do not possess a num\'eraire,  the existence of a strictly positive generalized supermartingale deflator depends crucially on the behaviour of the process $(\widehat{f}_t)_{t \in \mathbb{I}}$ hitting zero. More precisely, we define the debut of $(\widehat{f}_t)_{t \in \mathbb{I}}$ at the origin:
	\begin{equation*}
	\tau = \inf\big\{t \in \mathbb{I}: \widehat{f}_t = 0\big\},
	\end{equation*}
	with the convention $\inf \emptyset := \infty$. In view of the property that $\E_{\P}[\nicefrac{f}{\widehat{f}_t}] \le 1$ for all $f \in \mathcal{C}_t$, we indeed have that $\{\widehat{f}_t = 0\} \subseteq \{X_t = 0\}$ for all $X \in \mathcal{X}$, which in turn implies that after time $\tau$, the whole market becomes extinct, or in other words, the market $\mathcal{X}$ only survives on $[0,\tau)$. Assume for the moment that $\tau$ is measurable (we refer to Subsection~\ref{subsec:main-continuous} for the precise conditions) and denote by $\cL(\tau)$ the distribution of $\tau$ on $[0,T] \cup \{\infty\}$. One of the crucial observations in this paper is that the support $\cL(\tau)$ determines conditions for the existence of a strictly positive generalized supermartingale deflator; we refer to Theorem~\ref{theorem: second main theorem}, Theorem~\ref{theorem: third main theorem}, and Theorem~\ref{theorem: independent clock}.
\end{remark}

\newpage
\subsection{Main results in discrete-time}\label{subsec:main-discrete}
\begin{theorem}\label{thm:discrete}
	Let $\cX$ be a 
	market
	 satisfying the generalized fork convexity property. Then the following two statements are equivalent:
	\begin{enumerate}
	 \item NUPBR$_t$ holds for every $t$, i.e., the set $\mathcal{C}_t := \{X_t : X \in \mathcal{X} \}$ is bounded in probability for every $t$.
	\item There exists a strictly positive generalized supermartingale deflator. 
	\end{enumerate}
If we assume in addition that the market is $\F$-adapted, then 
 the following two statements are equivalent.
\begin{enumerate}
	\item NUPBR$_t$ holds for every $t$.
	\item There exists a strictly positive  supermartingale deflator. 
\end{enumerate}
\end{theorem}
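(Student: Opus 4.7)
My plan splits along the two implications. For (ii)$\Rightarrow$(i), the argument is immediate: applying~\eqref{eq:generalized supermartingale deflator property} between $0$ and $t$ yields $\E[X_tZ_t]\le Z_0\le 1$ for every $X\in\cX$, so $\{X_tZ_t:X\in\cX\}$ is bounded in $L^1$ and hence in $L^0$; since $Z_t>0$ a.s., multiplying by the a.s.\@ finite random variable $1/Z_t$ preserves $L^0$-boundedness and gives NUPBR$_t$. The argument is identical for the adapted and non-adapted statements. The hard direction is (i)$\Rightarrow$(ii), which I would prove via the disintegration scheme outlined in the introduction; this becomes particularly clean in discrete time since the market-crash time $\tau$ has finite support.

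I would carry out the construction in four steps. (1) \emph{Local maximizers.} For each $t\in\I$, NUPBR$_t$ makes the closed convex hull of $\cC_t$ in $L^0_+$ convexly compact, so \cite[Theorem~1.1]{Kardaras.10.2} supplies a maximal element $\widehat f_t$ for the preference $\preceq$ (chosen $\cF_t$-measurable in the adapted case). (2) \emph{Crash time.} The inequality $\E[X_t/\widehat f_t]\le 1$ forces $\{\widehat f_t=0\}\subseteq\{X_t=0\}$ for every $X\in\cX$; combining this with the killing property $\{X_s=0\}\subseteq\{X_t=0\}$ for $s\le t$ and with $\widehat f_t$ being an $L^0$-limit of convex combinations of terminal payoffs shows that the sets $\{\widehat f_t=0\}$ are nondecreasing in $t$. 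Defining $\tau:=\inf\{t\in\I:\widehat f_t=0\}$ then partitions $\Omega$ into the finitely many events $\{\tau=s\}_{s\in\I\cup\{\infty\}}$, and $\tau$ is a stopping time in the adapted case. (3) \emph{Local num\'eraire and local deflator.} On each slice $\{\tau=s\}$ with $\P[\tau=s]>0$ and every $t<s$, the essential supremum of $\cC_t$ is strictly positive, and a countable exhaustion combined with the generalized fork convexity of Definition~\ref{def:fork-convexity} produces an element $X^{\star,s}\in\cX$ strictly positive at every $t<s$ on $\{\tau=s\}$. Under the conditional probability $\P_s:=\P[\,\cdot\mid\{\tau=s\}]$, the market $\cX$ inherits NUPBR$_t$ and generalized fork convexity and admits $X^{\star,s}$ as a num\'eraire on $\{0,\dots,s-1\}$, so the classical Kardaras theorem \cite[Theorem~2.3]{Kardaras.13.2} yields a strictly positive generalized supermartingale deflator $Z^{(s)}$ for this restricted market. (4) \emph{Pasting.} Set
\begin{equation*}
Z_t:=\sum_{s\in\I\cup\{\infty\}} Z^{(s)}_t\,\mathbbm{1}_{\{\tau=s\}},
\end{equation*}
with $Z^{(s)}_t$ extended to $t\ge s$ by any strictly positive value (the deflator inequality is vacuous there because $X_t=0$ on $\{\tau\le t\}$). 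Strict positivity of $Z$ is then immediate, and~\eqref{eq:generalized supermartingale deflator property} is verified slice-by-slice by conditioning on the finite partition; adaptedness of $Z$ in the second assertion transfers from $\tau$ being a stopping time and each $Z^{(s)}$ being adapted.

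The main obstacle I anticipate is Step (3): extracting an \emph{actual} element $X^{\star,s}\in\cX$ from the merely idealized positivity of $\widehat f_t$ so that it can serve as a local num\'eraire. In the $\F$-adapted setting this is fairly clean because the splitting sets $\{X_t>0\}$ lie in $\cF_t$ and can be used as the event $A$ in~\eqref{eq: fork convexity in no numeraire case}, so a countable fork-convex splicing covers the essential supremum. In the non-adapted setting the relevant events live in $\cF$ rather than in $\cF_t$, so a more careful diagonal construction is needed; this is precisely where our slightly stronger notion of fork convexity (Remark~\ref{remark: compare with the classical fork convexity}) earns its keep. A secondary but easier point is verifying that NUPBR$_t$ and the generalized fork convexity descend to each conditional probability $\P_s$; this uses the finite support of $\tau$ in discrete time and would fail in a general continuous-time setting, which is ultimately why the later theorems require additional hypotheses.
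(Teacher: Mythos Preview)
Your overall disintegration-and-pasting strategy matches the paper's proof, but two of your steps contain genuine gaps.

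In Step~(3) you cannot in general produce an actual element $X^{\star,s}\in\cX$ that is strictly positive on $\{\tau=s\}$ at every $t<s$. The maximal element $\widehat f_t$ lies only in the $L^0$-closure of $\cC_t$, and generalized fork convexity permits only finitely many switches on $\cF_s$-measurable events, which does not suffice to exhaust a countable cover. Concretely: take $T=1$, $\Omega=[0,1]$ with Lebesgue measure, trivial $\cF_0$, and let $\cX$ be the convex (hence already fork-convex) hull of $\{X^\alpha:\alpha\in(0,1)\}$ where $X^\alpha_0=1$ and $X^\alpha_1=\mathbbm{1}_{(\alpha,1]}$; then $\widehat f_1=1$ a.s.\ yet every element of $\cX$ vanishes on a set of positive measure, so no local num\'eraire exists in $\cX$. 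The paper sidesteps this entirely by never seeking a genuine num\'eraire in $\cX$: it observes that $\widehat f_t>0$ $\P_s$-a.s.\ already places a strictly positive element in $\mbox{cl}_{\P_s}(\cC_t)$ and invokes a mild extension of Kardaras's theorem (Proposition~\ref{prop:Kardaras-3.2-closure-ok}) which only requires the \emph{closure} of the terminal value set to meet $L^0_{++}$.

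In Step~(4), for the adapted assertion, the pasted process $Z_t=\sum_s Z^{(s)}_t\mathbbm{1}_{\{\tau=s\}}$ is \emph{not} $\cF_t$-measurable: $\tau$ being a stopping time gives $\{\tau\le t\}\in\cF_t$, but the individual events $\{\tau=s\}$ for $s>t$ lie outside $\cF_t$, so your claim that ``adaptedness transfers'' is false. The paper repairs this by first selecting $\cF_t$-measurable versions of the local static deflators (Lemma~\ref{lemma: nice version of optimizer}), so that each $XY^{(s)}$ becomes a genuine $\Q_s$-supermartingale rather than merely a generalized one, and then replacing the pasted $Z$ by its optional projection $\widetilde Z_t:=\E_\P[Z_t\mid\cF_t]$; the deflator inequality survives this projection precisely because each $X_t$ is $\cF_t$-measurable.
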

The proof of Theorem~\ref{thm:discrete} is provided in Subsection~\ref{subsec:proof-discrete}.
\subsection{Main results in continuous-time}\label{subsec:main-continuous}
In this subsection, we provide our main results in the continuous-time setting. 
Let us   first introduce the notion of a generalized num\'eraire.
\begin{definition}\label{def: generalized numeraire}
	A process $\overline{X} \in \mathcal{X}$ is called a \textit{generalized num\'eraire} if for every $t \in [0,T]$ and $X \in \mathcal{X}$, one has $\P\big[\{X_t > 0\} \cap \{\overline{X}_t = 0\}\big] = 0$. 
\end{definition}

Note that generalized num\'eraires are not required to be strictly positive. However, if a market possess a num\'eraire, then the notions of generalized num\'eraire and (classical) num\'eraire coincide.
Financially speaking, a generalized num\'eraire is an asset which can only default if the whole market defaults. As a possible example, one may consider a government bond of a country which has AAA sovereign credit rating.
In the present continuous-time setting, we assume that a generalized num\'eraire exists. 
\begin{assumption}\label{ass:gener-numeraire}
The market $\cX$ contains a generalized num\'eraire $\ov X \in \cX$.
\end{assumption}
In addition, we impose the following standing condition on the filtration. 
\begin{assumption}\label{ass:usual conditions}
The filtered probability space $(\Omega,\cF, \F,\P)$ satisfies the usual conditions, meaning that $\cF$ is $\P$-complete, each $\cF_t$ is $\P$-$\cF$-complete, and $\F$ is right-continuous. 
\end{assumption}
This standard assumption guarantees the existence of c\`adl\`ag versions of supermartingales; see \cite[Theorem~VI.4, p.69]{DellacherieMeyer.82}. Moreover, in the presence of a generalized num\'eraire $(\overline X_t)$, which by definition satisfies $\{\overline{X}_t = 0\} \subseteq \{X_t = 0\}$ $\P$--a.s.\ for all $X \in \cX$, this assumption guarantees that the following debut $\tau$ is $\cF$-measurable, see e.g.\ \cite[Theorem~III.44, p.64]{DellacherieMeyer.78}, since $\cF$ by assumption is $\P$-complete:
\begin{equation}\label{eq: tau in continuous time}
\tau := \inf\{t \in [0,T]: \overline{X}_t = 0\},
\end{equation}
using the convention $\inf\emptyset := \infty$.
This allows us to consider the distribution $\cL(\tau)$   of $\tau$ on $[0,T] \cup \{\infty\}$ whose support turns out to determine the conditions for the existence of a strictly positive generalized supermartingale deflator; see also Remark~\ref{rem:static-deflator-not-enough}. 
We first start with the result stating that as long as $\cL(\tau)$ is discrete, we obtain the desired equivalence between the existence of a strictly positive generalized supermartingale deflator and NUPBR$_t$ for all $t$, like in the discrete-time setting.
\begin{theorem}\label{theorem: second main theorem}
	Let the underlying probability space $(\Omega,\cF, \F,\P)$ satisfy the usual conditions, let $\mathcal{X}$ be a market 
	satisfying the generalized fork convexity, assume that $\cX$
	possesses a generalized num\'eraire $\ov X$, and let $\tau$ denote its debut  at zero as in \eqref{eq: tau in continuous time}. If the support of $\cL(\tau)$ only consists of atoms, then the two following statements are equivalent.
	\begin{enumerate}
	\item NUPBR$_t$ holds for every $t$, i.e., the set $\mathcal{C}_t := \{X_t : X \in \mathcal{X} \}$ is bounded in probability for every $t$.
	\item There exists a strictly positive, $\P$-c\`adl\`ag generalized supermartingale deflator. 
\end{enumerate}
If we assume in addition that the market is $\F$-adapted, then the  following two statements are equivalent:
	\begin{enumerate}
	\item NUPBR$_t$ holds for every $t$.
	\item There exists a strictly positive, c\`adl\`ag supermartingale deflator.
\end{enumerate}
\end{theorem}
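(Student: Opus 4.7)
The direction~(ii)$\Rightarrow$(i) is standard: given a strictly positive generalized supermartingale deflator $Z$, the bound $\E_{\P}[X_tZ_t]\le Z_0\le 1$ combined with the strict positivity of $Z_t$ yields $\P$-boundedness of $\cC_t$ for every $t$ by a Markov-inequality argument (choose $\delta>0$ with $\P[Z_t<\delta]<\varepsilon/2$ and then $M$ large enough that $1/(\delta M)<\varepsilon/2$).

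For the converse~(i)$\Rightarrow$(ii) I would implement the disintegration strategy sketched in the introduction. Enumerate the countable support of $\cL(\tau)$ as $\{t_k\}_k\cup\{\infty\}$ with $p_k:=\P[\tau=t_k]>0$, and set $A_k:=\{\tau=t_k\}$ and $\P_k:=\P[\,\cdot\,|A_k]$. The sets $A_k$ partition $\Omega$ up to a null set, and each $A_k$ with $t_k<\infty$ lies in $\cF_{t_k}$ since $\tau$ is a stopping time of the right-continuous filtration. The plan is to build, for every $k$, a strictly positive $\P_k$-c\`adl\`ag generalized supermartingale deflator $Z^k$ on $[0,T]$ for $\cX$ under $\P_k$, and then to paste them by
\begin{equation*}
Z_t(\omega) := \sum_k Z^k_t(\omega)\,\mathbbm{1}_{A_k}(\omega),\qquad t\in[0,T].
\end{equation*}

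On $A_\infty$ the generalized num\'eraire $\ov X$ is strictly positive on the whole of $[0,T]$, hence a genuine num\'eraire under $\P_\infty$; since generalized fork convexity and NUPBR$_t$ descend from $\P$ to $\P_\infty\ll\P$, Kardaras's theorem \cite[Theorem~2.3]{Kardaras.13.2} provides $Z^\infty$ directly. For $t_k<\infty$, under $\P_k$ the process $\ov X$ is strictly positive on $[0,t_k)$, while the generalized num\'eraire property forces every $X\in\cX$ to vanish on $[t_k,T]$. I would apply \cite[Theorem~2.3]{Kardaras.13.2} to the stopped markets $\{X^{t_k-1/n}:X\in\cX\}$ under $\P_k$ (whose stopped num\'eraire $\ov X^{t_k-1/n}$ is strictly positive on all of $[0,T]$), obtaining deflators $Z^{k,n}$, and pass to a projective or diagonal limit to define $Z^k$ on $[0,t_k)$. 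Extending $Z^k$ by any strictly positive constant on $[t_k,T]$ preserves strict positivity and the c\`adl\`ag property, and the deflator condition past $t_k$ is automatic because the market is dead there.

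The final step is to verify the deflator property for the pasted $Z$. For any $X\in\cX$, $s\le t$ and $B\in\cF_s$,
\begin{equation*}
\E_{\P}\!\Big[\tfrac{X_tZ_t}{X_sZ_s}\mathbbm{1}_B\Big]
= \sum_k p_k\,\E_{\P_k}\!\Big[\tfrac{X_tZ^k_t}{X_sZ^k_s}\mathbbm{1}_B\Big]
\le \sum_k p_k\,\P_k[B] = \P[B],
\end{equation*}
so that $\E_{\P}[X_tZ_t/(X_sZ_s)\,|\,\cF_s]\le 1$, and both strict positivity and $\P$-c\`adl\`ag-ness of $Z$ transfer from each $Z^k$. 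The $\F$-adapted case requires one additional consistency step: for $t<t_k$ the indicator $\mathbbm{1}_{A_k}$ is not $\cF_t$-measurable, so the local deflators must be chosen coherently---e.g.\ via the canonical log-optimal construction of \cite{Kardaras.13.2}---so that $Z^k_t=Z^{k'}_t$ on the common domain $\{t<\min(t_k,t_{k'})\}$, making the restriction of $Z_t$ to $\{\tau>t\}$ depend only on the $\cF_t$-measurable structure of the market. I expect the main technical obstacle to be the localized construction of $Z^k$ for $t_k<\infty$, namely pushing the Kardaras deflators on $[0,t_k-1/n]$ to a strictly positive c\`adl\`ag limit on $[0,t_k)$; the purely atomic support of $\cL(\tau)$ makes this substantially easier than the general continuous-time situation of Theorem~\ref{theorem: third main theorem}, because each $t_k$ is deterministic and no uniform lower bound on deflators across all atoms is needed.
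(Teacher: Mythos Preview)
Your overall architecture---disintegrate along the atoms of $\cL(\tau)$, build local deflators under each conditional measure $\P_k$, and paste via $Z_t=\sum_k Z^k_t\mathbbm{1}_{A_k}$---is exactly the paper's approach, and your verification of the generalized deflator property and strict positivity for the pasted process is correct. Two points deserve comment.

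First, your detour through stopped markets $\{X^{t_k-1/n}:X\in\cX\}$ and a ``projective or diagonal limit'' is unnecessary. The paper simply notes that under $\P_k$ the generalized num\'eraire $\ov X$ is strictly positive on $[0,t_k)$, so Kardaras's construction applies pointwise: for each $s<t_k$ the static deflator $\widehat f^{t_k}_s\in\mbox{cl}_{\P_k}(\cC_s)$ exists and is strictly positive, and one sets $Y^{t_k}_s:=1/\widehat f^{t_k}_s$ for $s<t_k$ and $Y^{t_k}_s:=1$ for $s\ge t_k$. No limiting procedure in $n$ is needed. Your route could be made to work (the static deflator at time $s$ does not depend on the horizon, so the $Z^{k,n}$ are consistent), but it obscures the directness of the construction.

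Second, and more seriously, your handling of the $\F$-adapted case has a genuine gap. You propose to choose the local deflators ``coherently'' via the canonical log-optimal construction so that $Z^k_t=Z^{k'}_t$ on $\{t<\min(t_k,t_{k'})\}$. But the static deflator at time $t$ under $\P_k$ is the maximal element of $\mbox{cl}_{\P_k}(\cC_t)$ with respect to the $\P_k$-expectation functional $\E_{\P_k}[f/g]$; this depends on the measure $\P_k$, not just on $\cC_t$. For distinct atoms $t_k\neq t_{k'}$ the measures $\P_k$ and $\P_{k'}$ are mutually singular, and there is no reason their static deflators should coincide. The paper circumvents this entirely: it first uses Lemma~\ref{lemma: nice version of optimizer} to pick an $\cF_s$-measurable version of each $Y^{t_k}_s$ (so that each local $Y^{t_k}$ is a genuine, not merely generalized, supermartingale deflator under $\P_k$), then forms the optional projection $\widetilde Z_s:=\E_\P[Z_s\mid\cF_s]$ of the pasted process. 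A direct computation shows $\widetilde Z$ is an $\F$-adapted supermartingale deflator under $\P$. Finally, to obtain c\`adl\`ag (not just $\P$-c\`adl\`ag) paths, the paper regularizes the supermartingale $\ov X\widetilde Z$ and divides back by $\ov X$ on $\{\ov X>0\}$, setting the quotient to $1$ on $\{\ov X=0\}$; this last step uses the generalized num\'eraire in an essential way.
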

\begin{remark}\label{rem:2nd-thm-to-classical}
Note that if $\mathcal{X}$ contains a num\'eraire, then $\cL(\tau) = \delta_{\{\infty\}}$, i.e., the distribution of $\tau$ is the Dirac measure at $\infty$. In this case, the above Theorem~\ref{theorem: second main theorem} coincides with the classical result of Kardaras in \cite[Theorem~2.3]{Kardaras.13.2} but  with respect to the fork convexity defined as in Definition~\ref{def:fork-convexity}, see also Lemma~\ref{le:remark: compare with the classical fork convexity} and Remark~\ref{rem:C-T-C-t-NOT}.
\end{remark}
The proof of Theorem~\ref{theorem: second main theorem} is similar to the one for  Theorem~\ref{thm:discrete} in the discrete-time case. Roughly speaking, the idea is to construct for each $t$ in the support of $\cL(\tau)$  a  strictly positive ``local supermartingale deflator'' 
 under each $\P[\,\cdot\,|\,\tau = t]$
in order to paste them into a global one. As we will see later, the difficulty of providing a characterization for the existence of a strictly positive generalized supermartingale deflator arises when the support of $\cL(\tau)$ contains an uncountable subset $J \subseteq [0,T]$. At first glance, one would like to follow the same approach as for the case where the support of $\cL(\tau)$ only consists of atoms. More precisely, assume that  regular conditional probabilities $\P[\,\cdot\,|\,\tau = t]$ for  $t \in J$ exist and one could  construct  for \emph{each} $t \in J$ a  strictly positive ``local supermartingale deflator'' 
under each $\P[\,\cdot\,|\,\tau = t]$, in order to paste them into a global one.
 However, since $J$ is uncountable, \emph{not} all of these conditional probabilities are  absolutely continuous with respect to $\P$. Therefore, the condition that NUPBR$_s$ holds for each $s$  may fail with respect to some $\P[\,\cdot\,|\,\tau = t]$, even if we impose it to hold with respect to $\P$, and as a consequence one cannot construct strictly positive ``local supermartingale deflators'' for those conditional probabilities.

 To overcome this technical difficulty we introduce a stronger condition than that NUPBR$_t$ for each $t$. This stronger condition roughly speaking requires $\cC_t$ to be bounded uniformly with respect to all conditional $\P[\,\cdot\,|\,\tau \in (r,u]\cap J]$ for any $r,u$ on a countable dense set. This condition effectively allows one to transfer the discrete-time argument to the general continuous-time setting and allows us to formulate in the following Theorem~\ref{theorem: third main theorem} a characterization of the existence of a strictly positive generalized supermartingale deflator. 
 To this end, we introduce the following notation, where we recall our standing assumption in the continuous-time setting that the market $\mathcal{X}$  possesses a generalized num\'eraire $\ov X$, whose debut at zero is denoted by $\tau$ as in \eqref{eq: tau in continuous time}.
\begin{notation}\label{not:J-A-cond-prob}
Let the market $\mathcal{X}$  possess a generalized num\'eraire $\ov X$ with debut $\tau$ and corresponding distribution $\cL(\tau)$. Then, from now on, we will use the following notation:

\vspace*{0.15cm}
\noindent
$\bullet$  $\mathcal{A}$ denotes  the collection of all atoms in the support of $\cL(\tau)$;\\
$\bullet$ $\mathcal{J} := \text{supp}(\cL(\tau)) \setminus \mathcal{A}$;\\
$\bullet$ for each $r<u \in [0,T]$ denote by $\Q_{r,u}: \cF \to [0,1]$ the map
\begin{equation*}
\Q_{r,u}[\,\cdot\,] :=
\begin{cases}
\P\big[\cdot\,\big|\,\tau \in (r,u]\cap \mathcal{J}\big] & \mbox{ if } \P\big[\tau \in (r,u]\cap \mathcal{J}\big]>0,\\
0 														& \mbox{ else. }
\end{cases}
\end{equation*} 
 \end{notation}
\begin{theorem}\label{theorem: third main theorem}
		Using Notation~\ref{not:J-A-cond-prob}, let the underlying probability space $(\Omega,\cF, \F,\P)$ satisfy the usual conditions, let $\mathcal{X}$ be an $\F$-adapted market satisfying the generalized fork convexity,  and assume that $\cX$ possesses a generalized num\'eraire, allowing for $\mathcal J \neq \emptyset$.
		Then 
		the following statements $(i)$ and $(ii)$ are equivalent:
	\begin{enumerate}
		\item The following two properties hold:
		\begin{enumerate}
		 \item NUPBR$_t$ holds for every $t\in [0,T]$.
		 \item  There exists a countable dense subset $\mathcal{D} \subseteq [0,T]$ containing $0$ and $T$ such that for every $t$ in $\mathcal{D}$
		 \begin{equation}\label{eq: uniform boundedness in probability}
		 \limsup_{M \rightarrow \infty}\sup_{r,u \in \mathcal{D}, u >r\ge t}\sup_{X_t \in \mathcal{C}_t}\Q_{r,u}\big[X_t \ge M\big] = 0.
		 \end{equation}
		 \end{enumerate}

		\item 
		The following two properties hold:
		\begin{enumerate}
		\item There exists a strictly positive c\`adl\`ag 
		 supermartingale deflator $(Z_s)_{s \in [0,T]}$ for $\mathcal{X}$. 
		\item  There exist a countable dense subset $\mathcal{D} \subseteq [0,T]$ containing $\{0,T\}$ and a strictly positive process $(Z^\infty_t)_{t \in \cD}$ with $Z^\infty_0 \le 1$ such that
		\begin{itemize}
			\item for all $s<t$ in $\cD$ and for all $X  \in \mathcal{X}$, 
			\begin{equation*}
		\E_{\P}\Big[\tfrac{X_tZ^\infty_t}{X_sZ^\infty_s}\Big] \le 1,
			\end{equation*}
			\item for all $s<t$ in $\cD$ and all $\Q_{r,u}$ with $t\leq r<u\in \cD$,
			\begin{equation*}
			\E_{\Q_{r,u}}\Big[\tfrac{X_t Z^\infty_t}{X_sZ^\infty_s}\Big] \le 1,
			\end{equation*}
			\item for all $t \in \mathcal{D}$, 
			\begin{equation}\label{eq: uniform boundedness of deflator}
			\limsup_{M \rightarrow \infty}\sup_{r,u \in \mathcal{D}, u >r\ge t}\Q_{r,u}\Big[\tfrac{1}{Z^\infty_t} \ge M\Big] = 0.
			\end{equation}
		\end{itemize} 
	\end{enumerate}

	\end{enumerate}
\end{theorem}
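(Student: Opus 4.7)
The direction $(ii)\Rightarrow (i)$ is the easy one. For $(i)(a)$, fix $t\in[0,T]$, $X\in\cX$, and $\eps>0$. Splitting on $\{Z_t\ge\eps\}$ and applying the deflator inequality $\E_{\P}[X_tZ_t]\le Z_0X_0\le 1$ gives
\[ \P[X_t\ge M]\;\le\;\P[Z_t<\eps]+\tfrac{1}{M\eps}, \]
and strict positivity of $Z_t$ yields NUPBR$_t$ upon sending $\eps\to 0$ and then $M\to\infty$. The same splitting with $Z^\infty$ under $\Q_{r,u}$, combined with $\E_{\Q_{r,u}}[X_tZ^\infty_t]\le 1$ and \eqref{eq: uniform boundedness of deflator} controlling $\sup_{r,u}\Q_{r,u}[Z^\infty_t<\eps]$ uniformly, delivers $(i)(b)$ on the same dense set $\cD$.

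For the hard direction $(i)\Rightarrow (ii)$, the plan is to combine the $\tau$-disintegration of $\Omega$ used in Theorem~\ref{theorem: second main theorem} with a dyadic discretization in time. Fix a countable dense $\cD\subseteq[0,T]$ that realizes $(i)(b)$, contains $\{0,T\}\cup(\cA\cap[0,T])$, and is written as an increasing union $\cD=\bigcup_n\cD_n$ of finite sets. For each $n$ decompose $\Omega$ modulo $\P$-null sets as
\[ \Omega\;=\;\{\tau=\infty\}\,\sqcup\,\bigsqcup_{a\in\cA\cap[0,T]}\{\tau=a\}\,\sqcup\,\bigsqcup_{(r,u)\in\Delta_n}\{\tau\in(r,u]\cap\mathcal{J}\}, \]
where $\Delta_n$ denotes consecutive pairs in $\cD_n$. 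On $\{\tau=\infty\}$ and on each atomic cell $\{\tau=a\}$, the generalized num\'eraire $\ov X$ acts as a true num\'eraire under the associated regular conditional probability, and Theorem~\ref{theorem: second main theorem} supplies a strictly positive $\F$-adapted local deflator there. On each slab $\{\tau\in(r,u]\cap\mathcal{J}\}$ with $\P[\tau\in(r,u]\cap\mathcal{J}]>0$, the measure $\Q_{r,u}$ is absolutely continuous with respect to $\P$ restricted to the slab, $\ov X$ is strictly positive on $[0,r]$ under $\Q_{r,u}$, and hypothesis $(i)(b)$ yields $\Q_{r,u}$-NUPBR$_t$ on the finite grid $\cD_n\cap[0,r]$. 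The discrete-time Theorem~\ref{thm:discrete} then produces a strictly positive local deflator $Z^{r,u,n}$ there, which we extend by zero for times $>r$.

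Pasting the pieces via the indicators of the disintegration, and exploiting that $\tau$ is an $\F$-stopping time by Assumption~\ref{ass:usual conditions} applied to the debut of the c\`adl\`ag generalized num\'eraire $\ov X$, yields a strictly positive $\F$-adapted generalized supermartingale deflator $Z^n$ on $\cD_n$ satisfying $\E_\P\big[X_tZ^n_t/(X_sZ^n_s)\big]\le 1$ and $\E_{\Q_{r',u'}}\big[X_tZ^n_t/(X_sZ^n_s)\big]\le 1$ for every $s<t$ in $\cD_n$, $X\in\cX$, and $(r',u')\in\Delta_n$ with $t\le r'$; the verification is a direct tower-law computation on each cell of the disintegration.

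The main obstacle, already singled out in the introduction, is to pass to the limit $n\to\infty$ while preserving strict positivity on all of $\cD$. The log-optimal construction underlying Theorem~\ref{thm:discrete} identifies $1/Z^{r,u,n}_t$ as a $\Q_{r,u}$-limit of elements of $\cC_t$, so $(i)(b)$ translates into the uniform tightness estimate
\[ \limsup_{M\to\infty}\sup_n\sup_{(r,u)\in\Delta_n,\,r\ge t}\Q_{r,u}\big[1/Z^n_t\ge M\big]\;=\;0, \]
with an analogous bound under $\P$. A Koml\'os-type convex compactness argument based on \cite[Theorem~1.1]{Kardaras.10.2}, applied diagonally over the countable index set $\cD$, produces forward convex combinations $\bar Z^n\in\conv(Z^n,Z^{n+1},\dots)$ converging $\P$-a.s.\ and $\Q_{r,u}$-a.s.\ at every $t\in\cD$ to a strictly positive limit $Z^\infty_t$; Fatou's lemma propagates the three supermartingale inequalities in $(ii)(b)$ to the limit, and \eqref{eq: uniform boundedness of deflator} is inherited from the uniform tightness of $1/Z^n_t$. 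Finally, the usual conditions together with \cite[Theorem~VI.4]{DellacherieMeyer.82} allow us to define the right-limit $Z_t:=\lim_{s\downarrow t,\,s\in\cD}Z^\infty_s$ for $t\in[0,T]$, which is the required strictly positive $\F$-adapted $\P$-c\`adl\`ag supermartingale deflator; strict positivity of $Z_t$ is transferred from $Z^\infty$ via \eqref{eq: uniform boundedness of deflator} along any sequence $s_n\downarrow t$ in $\cD$ with $s_n\ge t$.
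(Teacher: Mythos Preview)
Your argument for $(ii)\Rightarrow(i)$ is correct and matches the paper's. For $(i)\Rightarrow(ii)$ the overall architecture---disintegrate by $\tau$, build local deflators on dyadic grids, paste, take Koml\'os limits, regularize---is the right one and coincides with the paper's strategy, but two steps do not go through as written.

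First, the pasted $Z^n$ is \emph{not} $\F$-adapted: the indicator $\mathbbm{1}_{\{\tau\in(r,u]\cap\mathcal{J}\}}$ is only $\cF_u$-measurable, not $\cF_t$-measurable for $t\le r$, and $\tau$ being a stopping time does not help here. (Also, extending the local deflators by $0$ rather than by $1$ beyond time $r$ destroys strict positivity on $\{\tau\le t\}$.) The paper therefore treats $(ii)(a)$ and $(ii)(b)$ via two different limits: the non-adapted pasted deflators $Z^k$ are used to build the $Z^\infty$ of $(ii)(b)$, while the adapted projections $\widetilde Z^k_t:=\E_\P[Z^k_t\mid\cF_t]$ are used to obtain the c\`adl\`ag supermartingale deflator of $(ii)(a)$. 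Second---and this is the decisive gap---you cannot ``propagate the supermartingale inequalities to the limit via Fatou'' after taking forward convex combinations of the $Z^n$, because the ratio inequality $\E[X_tZ_t/(X_sZ_s)]\le 1$ is \emph{not} convex in $Z$; a convex combination of generalized supermartingale deflators need not be one (this is precisely Remark~\ref{remark: convex combination of generalized supermartingale is not a generalized supermartingale}). The paper's workaround is to first establish the ``cross'' inequality $\E_{\P}\big[X_tZ^{k+l}_t/(X_sZ^k_s)\big]\le 1$ for all $l\ge 1$, which relies on the explicit static-deflator form of $Z^k$ and on the nesting $\Q^{k+l}_\cdot\ll\Q^k_\cdot$. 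This decouples numerator and denominator: one takes forward convex combinations and then Fatou in the numerator with $k$ held fixed below, obtaining $\E_\P[X_tZ^\infty_t/(X_sZ^k_s)]\le 1$, and only afterwards uses convexity of $x\mapsto 1/x$ together with a second application of Fatou to replace $Z^k_s$ by $Z^\infty_s$. Without this step the inequalities in $(ii)(b)$ are not established, and your outline has no substitute for it.
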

\begin{remark}\label{rem:main-thm-J-uncountable}
The property $\mathcal{J}\neq \emptyset$ can only happen if the market does not possess a num\'eraire, since under presence of a num\'eraire  $\text{supp}(\cL(\tau))=\{\infty\}$.
\end{remark}
\begin{remark}\label{rem:Interpretaion-main-result}
Conditions~\eqref{eq: uniform boundedness in probability} and~\eqref{eq: uniform boundedness of deflator} admit the following financial interpretation. By definition, the conditional measure $\Q_{r,u}$ models the investment possibilities of an informed trader who knows that the market crashes between times $r$ and $u$. Thus, the uniform integrability Condition~\eqref{eq: uniform boundedness in probability} means that such informational advantages do not aggregate into unbounded profit with bounded risk. Similarly, Condition~\eqref{eq: uniform boundedness of deflator} means that the deflator is bounded in probability uniformly with respect to all such insider information.
\end{remark}%
\begin{remark}\label{rem:main-thm-meaning}
In the discrete case, we will see from the proof of Theorem~\ref{thm:discrete}   that the strictly positive generalized supermartingale deflator is not only defined with respect to $\P$ but also all the conditional measures $\P[\,\cdot\,|\,\tau = t]$ for $t \in \{0,1,\dots,T\} \cup \{\infty\}$, see Remark~\ref{remark: global deflator is also local deflator}. In the continuous-time case the properties on $(Z^\infty_t)_{t \in \cD}$ can be roughly seen as the analogue.
\end{remark}

As discussed above in Remark~\ref{rem:static-deflator-not-enough}, the key property we need of $\tau$ defined in \eqref {eq: tau in continuous time} is that there exists a generalized num\'eraire which is strictly positive on $[\![0,\tau)\!)$, whereas the market dies out on $[\![\tau,T]\!]$. It turns out that if we can find a random time $\widetilde \tau$ which possesses exactly this property and is independent of the market, then we obtain the equivalence between the existence of a strictly positive, c\`adl\`ag supermartingale deflator 
and the property NUPBR$_t$ for all $t$.

\begin{theorem}\label{theorem: independent clock}
	Let the underlying probability space $(\Omega,\cF, \F,\P)$ satisfy the usual conditions, let $\mathcal{X}$ be an  $\F$-adapted market 
	satisfying the generalized fork convexity, and suppose that there exists an $\cF$--measurable random time $\widetilde\tau : \Omega \rightarrow (0,T]\cup \{\infty\}$ which satisfies the following two properties:
	
	\vspace*{0.15cm}
	\noindent
	$(1)$
	For each $s<t$ the event $\{\widetilde\tau \in (s,t]\}$ is independent of $\cF_s$.
	\\
	$(2)$ 
	{For each $s<t$ such that
		$\P\big[\widetilde\tau \in (s,t]\big] > 0$,
		the market $\mathcal{X}$  under $\P\big[\,\cdot\,\big|\, \widetilde\tau \in (s,t]\big]$ contains a num\'eraire until time $s$ and all elements in $\mathcal{X}$ vanish after time $t$.}
	
	\vspace*{0.15cm}
	\noindent
	Then  the following two statements are equivalent:
	\begin{enumerate}
		\item NUPBR$_t$ holds for every $t$.
		\item There exists a strictly positive, c\`adl\`ag supermartingale deflator.
	\end{enumerate}
\end{theorem}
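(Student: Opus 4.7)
For any supermartingale deflator $Z$ as in $(ii)$, applying the supermartingale inequality to $ZX$ between times $0$ and $t$ gives $\E_\P[Z_t X_t] \le Z_0 X_0 \le 1$ for every $X \in \cX$. Given $\eps > 0$, strict positivity of $Z_t$ yields $\delta > 0$ with $\P[Z_t < \delta] < \eps$, and Chebyshev's inequality then produces $\sup_{X \in \cX} \P[X_t \ge M] \le \eps + 1/(M\delta) \to \eps$ as $M \to \infty$, which establishes NUPBR$_t$.

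\textbf{Hard direction $(i) \Rightarrow (ii)$.} The plan is to reduce Theorem~\ref{theorem: independent clock} to Theorem~\ref{theorem: third main theorem} by showing that the independence of $\widetilde\tau$ automatically forces the uniform boundedness Condition~\eqref{eq: uniform boundedness in probability}. The first step is to identify $\widetilde\tau$ with the debut $\tau$ of the standing generalized num\'eraire $\overline X$ (Assumption~\ref{ass:gener-numeraire}) defined in~\eqref{eq: tau in continuous time}: for each rational $s < t$ with $\P[\widetilde\tau \in (s, t]] > 0$, condition~(2) supplies a num\'eraire $X^\ast \in \cX$ strictly positive on $[0, s]$ under the conditional measure, and the generalized num\'eraire inclusion $\{X^\ast_u > 0\} \subseteq \{\overline X_u > 0\}$ forces $\tau > s$ on $\{\widetilde\tau \in (s, t]\}$; the vanishing of $\overline X$ strictly after $t$ combined with right-continuity of $\overline X$ forces $\overline X_t = 0$, hence $\tau \le t$ there. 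A countable intersection over rational shrinking intervals trapping $\widetilde\tau$ then gives $\tau = \widetilde\tau$ $\P$-a.s.\ on $\{\widetilde\tau < \infty\}$, while $\{\widetilde\tau = \infty\}$ contributes at most an atom at $\infty$ which leaves $\cJ \subseteq [0, T]$ unaffected.

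The second step is to verify Theorem~\ref{theorem: third main theorem}(i) on any countable dense set $\cD \subseteq [0, T]$ with $\{0, T\} \subseteq \cD$. Condition~(a) is exactly the assumed NUPBR$_t$, so it remains to establish~\eqref{eq: uniform boundedness in probability}. The family $\{\{\widetilde\tau \in (r, u']\} : u' > r\}$ is a $\pi$-system independent of $\cF_r$ by condition~(1), so Dynkin's lemma makes the whole generated $\sigma$-algebra independent of $\cF_r$; in particular $\{\widetilde\tau \in (r, u] \cap \cJ\}$ is independent of $\cF_t$ for every $t \le r$. Using $\tau = \widetilde\tau$ a.s., this yields $\Q_{r, u}[X_t \ge M] = \P[X_t \ge M]$ for every $X_t \in \cC_t$ whenever $\Q_{r, u}$ is non-zero. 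Taking suprema over $r, u$ and $X_t$ reduces Condition~\eqref{eq: uniform boundedness in probability} to NUPBR$_t$ under $\P$, which is assumed; Theorem~\ref{theorem: third main theorem}(ii)(a) then delivers the desired strictly positive c\`adl\`ag supermartingale deflator.

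\textbf{Main obstacle.} The most delicate step is the identification $\tau = \widetilde\tau$ $\P$-a.s., which requires a careful countable-intersection argument over rational shrinking intervals trapping $\widetilde\tau$ and a separate accounting of $\{\widetilde\tau = \infty\}$ (together with the subtle use of right-continuity of $\overline X$ to pass from "vanishes strictly after $t$" to $\overline X_t = 0$). Once this identification is in hand, the independence of $\widetilde\tau$ from the filtration makes the uniform boundedness transfer a routine calculation, and Theorem~\ref{theorem: third main theorem} does the heavy lifting.
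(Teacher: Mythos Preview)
Your proposal takes a genuinely different route from the paper. The paper does \emph{not} reduce Theorem~\ref{theorem: independent clock} to Theorem~\ref{theorem: third main theorem}; instead it repeats the dyadic construction directly with $\widetilde\tau$ in place of $\tau$. Concretely, it sets $\Q^k_r:=\P[\,\cdot\,|\,\widetilde\tau\in(r,r+T/2^k]]$, builds the local deflators $Z^k$ on $\cD_k$ as in Step~1 of the proof of Theorem~\ref{theorem: third main theorem}, and then --- this is where independence enters --- uses condition~(1) to compute
\[
\frac{1}{\E_\P[Z^k_t\mid\cF_t]}
\;\le\;\sum_{r\in\cD_k,\,t\le r<T}\widehat f^{(k,r)}_t\,\P\big[\widetilde\tau\in(r,r+\tfrac{T}{2^k}]\big]\;+\;1,
\]
which lies in the $\P$-closed solid hull of $\cC_t$ (plus~$1$). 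Thus NUPBR$_t$ under~$\P$ alone bounds these conditional expectations uniformly in~$k$, and a Koml\'os argument finishes. The paper never needs to identify $\widetilde\tau$ with the debut~$\tau$ of the standing generalized num\'eraire.

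This is precisely where your reduction is vulnerable. Your key lemma is $\tau=\widetilde\tau$ $\P$-a.s., and your justification on $\{\widetilde\tau=\infty\}$ (``contributes at most an atom at $\infty$'') is not established by hypotheses~(1)--(2): condition~(2) says nothing about the market on $\{\widetilde\tau=\infty\}$, so $\overline X$ may well hit zero there at a finite, non-deterministic time. In that case $\cL(\tau)$ acquires a continuous part on $[0,T]$ coming from $\{\widetilde\tau=\infty\}$, and the event $\{\tau\in(r,u]\cap\cJ\}$ then contains a piece of $\{\widetilde\tau=\infty\}$ which need not be independent of $\cF_t$. Consequently your equality $\Q_{r,u}[X_t\ge M]=\P[X_t\ge M]$ is not justified, and condition~\eqref{eq: uniform boundedness in probability} does not follow. (On $\{\widetilde\tau<\infty\}$ your shrinking-interval argument for $\tau=\widetilde\tau$ looks fine, and your Dynkin step is correct for Borel subsets of $(r,T]$.)

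The upshot: the paper's direct approach buys robustness --- it never compares $\tau$ with $\widetilde\tau$ and uses only NUPBR$_t$ under~$\P$ together with the explicit independence~(1). Your reduction is elegant and would make the theorem a corollary of Theorem~\ref{theorem: third main theorem}, but as written it has a genuine gap on $\{\widetilde\tau=\infty\}$ that you would need to close (for instance by treating $\{\widetilde\tau=\infty\}$ separately, or by showing via independence that $\overline X_s>0$ a.s.\ whenever $\P[\widetilde\tau\in(s,T]]>0$ and then arguing more carefully about the support of~$\cL(\tau)$).
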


\begin{remark}\label{rem:cond1-weak}
	Condition $(1)$
	means that at each time $s$, all the information given on the market modelled by $\cF_s$ does not provide any information when the market will crash in the future.
	This may be realistic in a situation where $\widetilde \tau$ models the appearance of an extreme event like a natural disaster.
\end{remark}

\newpage
\section{Proof of the main results}\label{sec:proof}
We first provide auxiliary results which will be used frequently in the proof of our main results. The following lemma is well known, but we provide its proof for the sake of completeness.
\begin{lemma}\label{lemma: closure and closure of solid hull is bounded}
	Let $\P$ be any probability measure on a measurable space, let $\cC$ be any subset in $L^0_+(\P)$ which is bounded in $\P$--probability and let $\Q\ll \P$. Then the following holds true.
	\begin{enumerate}
		\item The $\P$--closure $\mbox{cl}_\P(\cC)$ of $\cC$ is bounded in $L^0_+(\P)$.
		\item The $\P$--closure of the $\P$-solid hull $\mbox{sol}_\P(\cC)$ of $\cC$ is bounded in $L^0_+(\P)$, where the $\P$-solid hull $\mbox{sol}_\P(\cC)$ is defined by
		$\mbox{sol}_\P(\cC) := \{g \in L^0_+(\P): g \le f \ \P\text{--a.s.\ } \mbox{for some } f\in \cC\}$.
		\item The set $\cC$ is bounded in $\Q$--probability.
	\end{enumerate}
\end{lemma}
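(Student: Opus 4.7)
\medskip

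My plan is to handle the three assertions in the order (ii)--(iii)--(i), since (ii) reduces to (i) via the solid hull bound, and (iii) is a self-contained absolute-continuity argument. The unifying idea is that $\P$-boundedness is the statement $\sup_{f\in\cC}\P[|f|\ge M]\to 0$ as $M\to\infty$, so each part amounts to controlling this supremum after an enlargement operation (solid hull, closure, change of measure).

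For part (ii), I would observe that any $g\in\mbox{sol}_\P(\cC)$ is dominated $\P$-a.s.\ by some $f\in\cC$ (and both are nonnegative), so $\{g\ge M\}\subseteq\{f\ge M\}$ up to a $\P$-null set, giving
\[
\sup_{g\in\mbox{sol}_\P(\cC)}\P[g\ge M]\le \sup_{f\in\cC}\P[f\ge M]\xrightarrow{M\to\infty}0.
\]
Thus $\mbox{sol}_\P(\cC)$ is $\P$-bounded, and part (i) applied to this set then yields part (ii).

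For part (iii), the key fact is that $\Q\ll \P$ implies the usual $(\varepsilon,\delta)$ characterization of absolute continuity: for every $\varepsilon>0$ there exists $\delta>0$ with $\P[A]<\delta\Rightarrow \Q[A]<\varepsilon$. Given $\varepsilon>0$, I choose such a $\delta$, then use $\P$-boundedness of $\cC$ to pick $M_0$ with $\sup_{f\in\cC}\P[f\ge M]<\delta$ for all $M\ge M_0$, and conclude $\sup_{f\in\cC}\Q[f\ge M]<\varepsilon$ for all such $M$.

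Part (i) is the one requiring a little care, since the closure is taken in the metric of convergence in $\P$-probability, not in $L^\infty$ or $L^1$. I would argue as follows: given $\varepsilon>0$, pick $M_0$ with $\sup_{f\in\cC}\P[f\ge M_0]<\varepsilon/2$; then for any $g\in\mbox{cl}_\P(\cC)$ there exists $f\in\cC$ with $\P[|g-f|\ge 1]<\varepsilon/2$, so
\[
\P[g\ge M_0+1]\le \P[|g-f|\ge 1]+\P[f\ge M_0]<\varepsilon,
\]
which is the required uniform bound. The main (and only) subtlety here is that the approximating $f$ depends on $g$, so one must split at level $M_0+1$ rather than at $M_0$; this is what makes the estimate uniform over the closure. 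No other part of the argument should pose a genuine obstacle.
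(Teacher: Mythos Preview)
Your argument is correct. Part (ii) is handled exactly as in the paper (bound the solid hull trivially, then invoke (i)). For parts (i) and (iii) you take slightly different but equally elementary routes.

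For (i), the paper passes to an a.s.-convergent subsequence $f^n\to f$ and applies Fatou's lemma to $\mathbbm{1}_{\{f^n> M\}}$ to get $\P[f> M]\le\liminf_n\P[f^n> M]\le\varepsilon$; you instead use a single approximant $f$ in probability and the splitting $\{g\ge M_0+1\}\subseteq\{|g-f|\ge 1\}\cup\{f\ge M_0\}$. Your version avoids the subsequence extraction, while the paper's Fatou argument yields the bound at the same level $M$ rather than $M+1$; neither gain matters here.

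For (iii), the paper writes out the Radon--Nikodym density $\cZ=\tfrac{d\Q}{d\P}$ and truncates at level $N$ to obtain $\Q[X\ge M]\le \E_\P[\cZ\mathbbm{1}_{\{\cZ>N\}}]+N\,\P[X\ge M]$, then sends $M\to\infty$ followed by $N\to\infty$. You invoke the $(\varepsilon,\delta)$-characterization of absolute continuity for finite measures directly. The paper's computation is essentially a proof of that characterization, so the two arguments are the same in substance; yours is shorter if one is willing to cite the $(\varepsilon,\delta)$-criterion as known.
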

\begin{proof}
	(i) Let $f \in \mbox{cl}_\P(\cC)$ and let $f^n$, $n \in \N$, be a sequence in $\cC$ such that $f^n$ converges to $f$ in $\P$--probability. By passing to a subsequence, we can also assume that the sequence convergences $\P$--a.s. For a given $\varepsilon>0$, since $\cC$ is bounded in $L^0_+(\P)$, there is an $M > 0$ such that  $\P[f^n > M] \le \varepsilon$ holds for all $n$,
	which in turn implies by Fatou's lemma that $\P[f> M] \le \liminf_{n \rightarrow \infty} \P[f^n> M] \le \varepsilon$. Consequently, we conclude the $\P$--boundedness of $\mbox{cl}_\P(\cC)$.\\
	(ii) If $\cC$ is bounded in $L^0_+(\P)$, then by definition its $\P$-solid hull $\mbox{sol}_\P(\cC)$ is also bounded in $L^0_+(\P)$, and so by (i) is then its $\P$-closure.
	\\
	(iii). Let $\cZ \in L^1(\P)$ denote the Radon-Nikodym derivative of $\Q$ with respect to $\P$. Then  we have  
	for any $N,M>0$ 
	and $X \in \cC$ 
	that
	\begin{equation*}
	\begin{split}
	\Q[X\geq M] 
	&=\E_\P[\cZ\,\mathbbm{1}_{\{X\geq M\}}]\\
	&=
	\E_\P[\cZ\,\mathbbm{1}_{\{X\geq M\}}\,\mathbbm{1}_{\{Z> N\}}] + \E_\P[\cZ\,\mathbbm{1}_{\{X\geq M\}}\,\mathbbm{1}_{\{Z\leq N\}}]
	\\
	&\leq
	\E_\P[\cZ\,\,\mathbbm{1}_{\{Z> N\}}] + N \P[X\geq M].
	\end{split}
	\end{equation*} 
	 Therefore, using that by assumption $\cC$ is bounded in $\P$--probability, we see that for all $N>0$
	 \begin{equation*}
	 \lim_{M\to \infty}\sup_{X\in \cC}\Q[X\geq M] \leq  \E_\P[\cZ\,\mathbbm{1}_{\{Z> N\}}].
	 \end{equation*}
	 Letting now $N$ tend to infinity  implies (iii), as $\cZ \in L^1(\P)$.
\end{proof}
A crucial tool for the proof of our main results is the notion of \textit{static deflators} introduced by Kardaras \cite{Kardaras.10.2}. 
\begin{lemma}\label{lemma: static numeraire lemma in the absence of numeraire}
	Let $\mathcal{C} \subseteq L^0_+$ be convexly compact. Let $\preceq$ be a binary relation on $\mathcal{C}$ such that $f \preceq g$ if and only if $\E[f/g] \le 1$ with the convention $0/0 = 0$. Then there exists a unique maximal element $\widehat{f} \in \mathcal{C}$ with respect to this relation $\preceq$. In particular, it holds that  $\{\widehat{f} = 0\} \subseteq \{f=0\}$ $\P$--a.s.\ for all $f \in \mathcal{C}$.
\end{lemma}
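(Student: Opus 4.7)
The strategy is to reduce the statement directly to \cite[Theorem~1.1]{Kardaras.10.2}. That theorem asserts that for every (non-empty) convexly compact $\cC \subseteq L^0_+$ there exists a \emph{unique} element $\widehat f \in \cC$ characterised by the property that $\E[f/\widehat f] \le 1$ for every $f \in \cC$, under the convention $0/0 := 0$. In the language of the present lemma, this property reads $f \preceq \widehat f$ for every $f \in \cC$, and hence $\widehat f$ is the unique $\preceq$-maximum of $\cC$ and, in particular, its unique $\preceq$-maximal element.

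To obtain the additional support inclusion $\{\widehat f = 0\} \subseteq \{f = 0\}$, I would argue by contradiction. Suppose there were some $f \in \cC$ and a measurable set $A$ with $\P[A] > 0$ and $A \subseteq \{\widehat f = 0\} \cap \{f > 0\}$. On $A$ the quotient $f/\widehat f$ equals $+\infty$, since $f > 0$ there and the convention $0/0 := 0$ therefore does not apply. Integrating yields $\E[f/\widehat f] = +\infty$, in direct contradiction to $\E[f/\widehat f] \le 1$. Hence $\{\widehat f = 0\} \subseteq \{f = 0\}$ $\P$-a.s.\ for every $f \in \cC$.

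No step is expected to be a serious obstacle: the first is just a translation of \cite{Kardaras.10.2} into the relation $\preceq$, and the second is a short integrability argument. The only subtle point is to cleanly separate the two regimes of the $0/0$ convention, namely $\{\widehat f = 0, f = 0\}$ (on which the integrand is set to $0$) and $\{\widehat f = 0, f > 0\}$ (on which it must be read as $+\infty$); that distinction is precisely what drives the support inclusion.
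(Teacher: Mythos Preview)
Your proposal is correct and follows exactly the paper's own approach: the paper's proof consists solely of the reference ``See the proof of \cite[Theorem~1.1]{Kardaras.10.2},'' which is precisely your first step, and your short contradiction argument for the support inclusion $\{\widehat f = 0\} \subseteq \{f = 0\}$ is the natural way to extract that consequence from $\E[f/\widehat f]\le 1$.
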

\begin{proof}
	See the proof of \cite[Theorem~1.1]{Kardaras.10.2}.
\end{proof}
Note that the above lemma does not ensure that $\widehat{f}$ is strictly positive. However this follows from the stronger assumption  $\mathcal{C} \cap L^0_{++} \neq \emptyset$, as stated next.
\begin{lemma}\label{lemma: static deflator}(Theorem~3.2 in \cite{Kardaras.13.2})
	Let $\mathcal{C} \subseteq L^0_+$ be convexly compact 
	such that $\mathcal{C} \cap L^0_{++} \neq \emptyset$. Then there exists a unique  $\widehat{f} \in \mathcal{C} \cap L^0_{++}$ such that $\E[f/\widehat{f}] \le 1$ holds for all $f \in \mathcal{C}$, which is then called  the \textit{static deflator}.
\end{lemma}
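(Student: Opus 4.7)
The plan is to obtain Lemma~\ref{lemma: static deflator} essentially as a direct corollary of the previous Lemma~\ref{lemma: static numeraire lemma in the absence of numeraire}. The only additional input we need to exploit is the hypothesis $\mathcal{C} \cap L^0_{++} \neq \emptyset$, which was absent in the more general statement.

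First, I would apply Lemma~\ref{lemma: static numeraire lemma in the absence of numeraire} to the convexly compact set $\mathcal{C}$. This provides a unique maximal element $\widehat{f} \in \mathcal{C}$ with respect to the relation $\preceq$, so that by the definition of $\preceq$ we already have the inequality
\begin{equation*}
\E\Big[\tfrac{f}{\widehat{f}}\Big] \le 1, \qquad f \in \mathcal{C},
\end{equation*}
as well as the inclusion $\{\widehat{f}=0\} \subseteq \{f=0\}$ $\P$--a.s.\ for every $f \in \mathcal{C}$. In particular, uniqueness is inherited directly from that lemma, so the only thing left to verify is the strict positivity $\widehat{f} \in L^0_{++}$.

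To establish strict positivity, I would pick any element $g \in \mathcal{C} \cap L^0_{++}$ provided by the additional hypothesis. Applying the inclusion from the previous lemma to $f = g$ gives $\{\widehat{f}=0\} \subseteq \{g=0\}$ $\P$--a.s., and since $\P[g>0]=1$ by the definition of $L^0_{++}$, we conclude $\P[\widehat{f}>0]=1$, i.e., $\widehat{f} \in L^0_{++}$. Combining this with the two properties recalled above yields the desired static deflator.

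The main obstacle is not really contained in this short argument, since it has been absorbed into Lemma~\ref{lemma: static numeraire lemma in the absence of numeraire}: the genuinely hard step is the existence of a maximizer for the relation $\preceq$ on a convexly compact set, which relies on the convex compactness theory of \cite{Zitkovic.09} and \cite{Kardaras.10.2}. Given that lemma, upgrading the maximal element to a strictly positive one under the nonemptiness assumption $\mathcal{C} \cap L^0_{++} \neq \emptyset$ is immediate from the support-inclusion property.
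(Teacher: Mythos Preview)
Your argument is correct. The paper does not give its own proof of this lemma; it simply records the statement with the attribution ``(Theorem~3.2 in \cite{Kardaras.13.2})'' and moves on. Your derivation from Lemma~\ref{lemma: static numeraire lemma in the absence of numeraire} is exactly the natural one-line reduction: the maximal element $\widehat f$ produced there already satisfies $\E[f/\widehat f]\le 1$ for all $f\in\mathcal C$ together with the support inclusion $\{\widehat f=0\}\subseteq\{f=0\}$, and testing the latter against any $g\in\mathcal C\cap L^0_{++}$ forces $\P[\widehat f=0]=0$. Uniqueness carries over verbatim. There is nothing to add.
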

The next proposition shows that the main result in Kardaras
\cite[Theorem~2.3]{Kardaras.13.2} remains valid also when a fork-convex wealth process set $\mathcal{X}$ does not contain a strictly positive process, but the closure $\mbox{cl}_\P(\cC_T)$ 
of the final value set $\mathcal{C}_T$ contains a strictly positive random variable.
\begin{proposition}\label{prop:Kardaras-3.2-closure-ok}
Let $\cX$ be a fork convex wealth process set
such  that $\mbox{cl}_\P(\cC_T)$ contains a strictly positive random variable. Then  the main result in Kardaras
\cite[Theorem~2.3]{Kardaras.13.2} remains valid, i.e., NUPBR is equivalent to the existence of a strictly positive, $\P$-c\`adl\`ag generalized supermartingale deflator.
\end{proposition}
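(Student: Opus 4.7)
The plan is to adapt the proof of \cite[Theorem~2.3]{Kardaras.13.2} by letting a strictly positive element of $\mbox{cl}_\P(\cC_T)$ take over the role the numéraire plays in the classical setting. The implication from the deflator to NUPBR is routine and does not involve the closure assumption, so I focus on the converse.

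Fix $h \in \mbox{cl}_\P(\cC_T) \cap L^0_{++}$ and choose $X^n \in \cX$ with $X^n_T \to h$ in $L^0(\P)$; after extracting a subsequence the convergence is $\P$--a.s., hence $\P[X^n_T > 0] \to 1$. The vanishing property of wealth processes gives $\{X^n_T > 0\} \subseteq \{X^n_t > 0\}$ for every $t$, so $\P[X^n_t > 0] \to 1$ uniformly in $t$. Since $\cC_T$ is convex and $\P$--bounded by NUPBR, Lemma~\ref{lemma: closure and closure of solid hull is bounded}(i) makes $\mbox{cl}_\P(\cC_T)$ convexly compact, and Lemma~\ref{lemma: static numeraire lemma in the absence of numeraire} provides a maximal element $\hat f_T$ with $\E_\P[f/\hat f_T]\le 1$ for every $f \in \mbox{cl}_\P(\cC_T)$ and $\{\hat f_T = 0\} \subseteq \{f=0\}$. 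Taking $f=h$ forces $\hat f_T > 0$ $\P$--a.s.

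The core step is to upgrade NUPBR to NUPBR$_t$ for every $t \in \mathbb{I}$. For $Y \in \cX$, apply generalized fork convexity at time $t$ with $X^1 = Y$, $X^2 = X^3 = X^n$, and $A = \Omega$, producing an element of $\cX$ whose terminal value equals
\begin{equation*}
\tilde Y^{(n)}_T = (X^n_T/X^n_t)\, Y_t\, \mathbbm{1}_{\{X^n_t > 0\}} + Y_T \,\mathbbm{1}_{\{X^n_t = 0\}}.
\end{equation*}
Splitting $\E_\P[\tilde Y^{(n)}_T/\hat f_T] \le 1$ into its two nonnegative summands and writing $\eta^n := (X^n_T/(X^n_t \hat f_T))\mathbbm{1}_{\{X^n_t > 0\}}$, which is strictly positive on the set $\{X^n_T > 0\}$ of probability tending to $1$, I obtain $\E_\P[Y_t\,\eta^n] \le 1$ uniformly in $Y \in \cX$. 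For a given $\delta > 0$, the inequality $\P[Y_t \ge M] \le \P[Y_t\eta^n \ge M\varepsilon] + \P[\eta^n < \varepsilon]$ combined with Markov's inequality and a successive choice of $n$ (so that $\P[\eta^n = 0]$ is small), then of $\varepsilon$ (so that $\P[0 < \eta^n < \varepsilon]$ is small), and finally of $M$, yields the $\P$--boundedness of $\cC_t$.

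Once NUPBR$_t$ is in hand for every $t$, $\mbox{cl}_\P(\cC_t)$ is convexly compact and Lemma~\ref{lemma: static numeraire lemma in the absence of numeraire} yields a maximal element $\hat f_t$ with $\{\hat f_t = 0\} \subseteq \{f=0\}$ for every $f \in \mbox{cl}_\P(\cC_t)$. Since $X^n_t \in \cC_t$ and $\P[X^n_t = 0] \to 0$, this forces $\hat f_t > 0$ $\P$--a.s. Setting $Z_t := 1/\hat f_t$, the generalized supermartingale deflator inequality $\E_\P[X_t Z_t/(X_s Z_s)\,|\,\cF_s]\le 1$ for $s \le t$ and $X \in \cX$ is verified by the same fork-convexity swap used in \cite[Theorem~2.3]{Kardaras.13.2}: for $A \in \cF_s$, fork convexity at time $s$ produces an element of $\cC_t$ that agrees with $(X_t/X_s)\hat f_s$ on $A$ and realizes $\hat f_t$ off $A$, and the static optimality of $\hat f_t$ then gives the conditional inequality. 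The $\P$--càdlàg regularization of $Z$ proceeds verbatim as in \cite{Kardaras.13.2}. The main obstacle I anticipate is the NUPBR$_t$ upgrade: since the sequence $X^n$ only substitutes for a numéraire asymptotically, the split via $\eta^n$ and the successive choice of $n$, $\varepsilon$, and $M$ must be carried out carefully; the remaining steps are direct transplants of Kardaras's arguments, with $\hat f_t$ living in $\mbox{cl}_\P(\cC_t)$ rather than in $\cC_t$.
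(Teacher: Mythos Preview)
Your proof is correct and follows the same overall strategy as the paper: use a sequence $X^n \in \cX$ with $X^n_T \to h \in L^0_{++}$ as a substitute for the missing num\'eraire in Kardaras's argument. The paper's proof is much terser---it only explicitly verifies that $\mbox{cl}_\P(\cC_t) \cap L^0_{++} \neq \emptyset$ (via Egorov's theorem) and defers the rest, including the NUPBR$_t$ upgrade, to a ``careful inspection'' of \cite[Theorem~2.3]{Kardaras.13.2}. You, by contrast, spell out the NUPBR$_t$ step directly through the fork-convex switch $\tilde Y^{(n)}$ and the static deflator $\hat f_T$; this is a genuine addition of detail, since Kardaras's original NUPBR$_t$ argument uses a fixed num\'eraire rather than an approximating one, and the passage via $\eta^n$ and the successive choice of $n$, $\varepsilon$, $M$ is exactly what is needed. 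Your route to $\hat f_t > 0$ is also slightly cleaner than the paper's: you use $\P[X^n_t > 0] \ge \P[X^n_T > 0] \to 1$ together with the maximality inclusion $\{\hat f_t = 0\} \subseteq \{X^n_t = 0\}$, whereas the paper invokes Egorov's theorem to the same end.
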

\begin{proof}
A careful inspection shows that the proof of \cite[Theorem~2.3]{Kardaras.13.2} remains valid also in this slightly more general case, provided  we can show that $\mbox{cl}_\P(\cC_T) \cap L^0_{++}(\P) \neq \emptyset$ ensures that $\mbox{cl}_\P(\cC_t) \cap L^0_{++}(\P) \neq \emptyset$ for all $t < T$. This is shown in the remaining part.

Suppose that $g_T \in \mbox{cl}_\P(\cC_T)$ is strictly positive and let  $\xi^n$, $n \in \N$, be a sequence in $\mathcal{X}$ such that $\xi^n_T$ converges to $g_T$ in probability. By passing to a subsequence we can even require the convergence to hold $\P$--almost surely.  Since $\P[g_T > 0] = 1$, for each $\varepsilon > 0$ there is an $\eta > 0$ such that $\P[g_T > \varepsilon] \ge 1 - \eta$ and $\eta$ converges to $0$ if $\varepsilon \rightarrow 0$. Moreover, by Egorov's theorem, there exists a set $\Gamma_T \in \mathcal{F}$ with $\P[\Gamma_T] \ge 1 - \eta$  such that $\xi^n_T$ converges to $g_T$ uniformly on $\Gamma_T$. Hence it holds that $\P[\Gamma_T \cap \{g_T > \varepsilon\}] \ge 1 - 2\eta$ and there exists an $N$ such that for all $n \ge N$, $\xi^n_T > \frac{\varepsilon}{2} > 0$ on $\Gamma_T \cap \{g_T > \varepsilon\}$. Furthermore, thanks to the property (ii) of the wealth process set, we have $\xi^n_t > 0$ on $\Gamma_T \cap \{g_T > \varepsilon\}$ for all $n \ge N$. Now let $\widehat f_t \in \mbox{cl}_\P(\cC_t)$ be the maximal element in the sense of Lemma~\ref{lemma: static numeraire lemma in the absence of numeraire}. Then, we have 
\begin{equation*}
\P[\widehat f_t > 0] \ge \P[\xi^n_t >0] \ge \P[\Gamma_T \cap \{g_T > \varepsilon\}] \ge 1 - 2\eta.
\end{equation*}
 Since $\varepsilon$ (and hence $\eta$) can be chosen arbitrarily small, we conclude that $\P[\widehat f_t > 0] = 1$.
\end{proof}
\begin{remark}\label{rem:prop-Kardaras-Q-ok}
Using Lemma~\ref{lemma: closure and closure of solid hull is bounded}, we see that Proposition~\ref{prop:Kardaras-3.2-closure-ok} remains valid when replacing $\P$ by  any $\Q \ll \P$.
\end{remark}
The following lemma will be crucial to construct supermartingale deflators (i.e., $\F$-adapted ones).
\begin{lemma}\label{lemma: nice version of optimizer}
	Let $\Q$ and $\P$ be two probabilities such that $\Q \ll \P$ on $\cF$. Let $\mathcal{C}$ be any $\P$-bounded subset in $L^0(\Omega,\cG,\P)$ for some sub $\sigma$--field $\cG \subseteq \cF$. Then any $\Q$--random variable $g \in \mbox{cl}_{\Q}(\mathcal{C})$ admits a $\cG$--measurable 
	$\Q$-version $g^\prime$ which is in the $\P$-closure of the $\P$-solid hull of $\cC$.
\end{lemma}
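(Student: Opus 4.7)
The plan is to construct $g^\prime$ by extracting a $\Q$-almost-surely convergent subsequence from any sequence in $\cC$ realising $g$, defining $g^\prime$ as the resulting $\cG$-measurable pointwise limit (extended by zero off the convergence set), and then witnessing $g^\prime$ as a $\P$-almost-sure limit of a sequence in $\mbox{sol}_\P(\cC)$ via the truncation $h_n:=f_n\wedge g^\prime$. I assume throughout that $\cC\subseteq L^0_+(\P)$, as is implicit in the definition of the solid hull.

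Concretely, since $g\in\mbox{cl}_\Q(\cC)$, there is a sequence $f_n\in\cC$ with $f_n\to g$ in $\Q$-probability; after passing to a subsequence, $f_n\to g$ $\Q$-a.s. Consider the $\cG$-measurable event
\begin{equation*}
A:=\bigl\{\omega\in\Omega:\liminf\nolimits_n f_n(\omega)=\limsup\nolimits_n f_n(\omega)\in\R\bigr\},
\end{equation*}
which satisfies $\Q[A]=1$ because $g$ is finite $\Q$-a.s. Define $g^\prime(\omega):=\lim_n f_n(\omega)$ for $\omega\in A$ and $g^\prime(\omega):=0$ otherwise; this is $\cG$-measurable and finite, and it agrees with $g$ on the $\Q$-full subset of $A$ where $f_n\to g$ pointwise, so $g^\prime$ is a $\cG$-measurable $\Q$-version of $g$.

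It then remains to show $g^\prime\in\mbox{cl}_\P(\mbox{sol}_\P(\cC))$. I propose the sequence $h_n:=f_n\wedge g^\prime$; since $0\le h_n\le f_n\in\cC$, each $h_n$ lies in $\mbox{sol}_\P(\cC)$. Pointwise convergence splits into two cases. On $A$, $f_n(\omega)\to g^\prime(\omega)$ by construction, and continuity of $\min$ gives $h_n(\omega)\to g^\prime(\omega)\wedge g^\prime(\omega)=g^\prime(\omega)$. On $A^c$, $g^\prime(\omega)=0$ and $f_n(\omega)\ge 0$, so $h_n(\omega)=f_n(\omega)\wedge 0=0=g^\prime(\omega)$. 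Hence $h_n\to g^\prime$ everywhere on $\Omega$, in particular in $\P$-probability, yielding $g^\prime\in\mbox{cl}_\P(\mbox{sol}_\P(\cC))$.

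The main obstacle is the mismatch between $\Q$- and $\P$-convergence: since $\Q\ll\P$ is not assumed to be an equivalence, the original sequence $f_n$ may converge $\Q$-a.s.\ to $g$ while oscillating or escaping to infinity on a $\Q$-negligible but $\P$-significant set, which would prevent $g^\prime$ from being a $\P$-limit of the $f_n$ themselves. The truncation $f_n\wedge g^\prime$ bypasses this by collapsing the problematic set $A^c$ to the common value $0$ while remaining inside the solid hull, turning $\Q$-a.s.\ convergence on $A$ into genuine pointwise convergence on all of $\Omega$. The $\P$-boundedness of $\cC$ is not used in the construction itself, but it guarantees via Lemma~\ref{lemma: closure and closure of solid hull is bounded} that $\mbox{cl}_\P(\mbox{sol}_\P(\cC))$ remains $\P$-bounded.
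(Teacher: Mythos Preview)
Your proof is correct and follows essentially the same approach as the paper: both define the $\cG$-measurable convergence set $A$, set $g':=\lim_n f_n$ on $A$ and $g':=0$ on $A^c$, and then exhibit $g'$ as a $\P$-a.s.\ limit of elements of $\mbox{sol}_\P(\cC)$. The only cosmetic difference is in the approximating sequence for the final step---the paper uses $f_n\mathbbm{1}_A$ whereas you use $f_n\wedge g'$---but both choices are dominated by $f_n$ and converge everywhere to $g'$, so the arguments are interchangeable.
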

\begin{proof}
	Since $g$ is in the closure of $\mathcal{C}$ inside $L^0(\Q)$, there exists a sequence $(g_n)_{n \in \N} \subseteq \mathcal{C} \subseteq L^0(\Q)$ such that $g = \lim_{n \rightarrow \infty} g_n$ \, $\Q$--a.s. 
	In addition, as  $\Q \ll \P$ and $\mathcal{C}$ is bounded in $\P$--probability, it is also bounded in $L^0(\Q)$ and consequently $g$ is finite $\Q$--a.s.\\
	Without loss of generality we can assume that each $g_n$ is a $\P$--random variable in $\mathcal{C}$ (i.e., we pick a representative of $g_n$ such that its equivalence class modulo $\P$--null sets belongs to $\mathcal{C} \subseteq L^0(\P)$); in particular, every $g_n$ is $\cG$--measurable. Then, consider the set 
	\begin{equation*}
	A:= \big\{\omega \in \Omega: g_n(\omega), n \in \N, \text{ is a Cauchy sequence in } \R\big\},
	\end{equation*}
	which is $\cG$--measurable.
	Since $g_n$ converges to a $\R$--valued random variable $g$ $\Q$--a.s., we have $\Q[A] = 1$. Now we define
	 \begin{equation*}
	g^\prime := \limsup_{n \rightarrow \infty} g_n \mathbbm{1}_A = \lim_{n \rightarrow \infty}g_n\mathbbm{1}_A.
	 \end{equation*} Clearly, we have the $\cG$--measurability of $g^\prime$ and  $\Q[g = g^\prime] = 1$. Hence, the equivalence class of $g^\prime$ modulo $\Q$--null set in $L^0(\Q)$ is equal to $g$. Moreover, by construction of $g'$, we see that $g'$ is in the $\P$-closure of the $\P$-solid hull of $\cC$. This completes the proof.
\end{proof}
\subsection{Proof of the main results in discrete-time}\label{subsec:proof-discrete}
In this subsection we provide the proof of our main result Theorem~\ref{thm:discrete} in the discrete-time setting. 
We first note that the most important tool in its proof is the concept of static deflators,  see Lemma~\ref{lemma: static numeraire lemma in the absence of numeraire}.
To visualize its importance, suppose that the NUPBR$_t$ condition holds for each $t$. Then  
the $\P$--closure $\mbox{cl}_\P(\cC_t)$ of $\cC_t$ is convexly compact and we can pick for every $t$ a maximal element $\widehat{f}_t  \in \mbox{cl}_\P(\cC_t)$. For this finite sequence $\hat{f}_t$, $t \in \{0,1,\dots, T\}$, we define $\tau$ as its first hitting time of $0$, namely
\begin{equation}\label{def: debut time tau}
\tau := \inf\big\{t \in \{0,1,\dots, T\}: \hat{f}_t = 0 \big\},
\end{equation}
with the convention $\inf \emptyset = \infty$. Observe that $\tau$ is $\cF$-measurable
and for each $t$ one has that
\begin{equation*}
\begin{split}
\{\tau = \infty\} &= \big\{\widehat{f}_s >0, \forall s \in \{0,1,\dots, T\}\big\}, \\
\{\tau = t\} &= \big\{\widehat{f}_s >0, \forall s<t, s \in \{0,1,\dots, T\}\big\} \cap \big\{\widehat{f}_t = 0\big\}. 
\end{split}
\end{equation*}
 It follows immediately that for all $t \in \{0,1,\dots, T\} \cup \{\infty\}$,
\begin{equation}\label{eq: before tau}
\forall s<t, s\in \{0,1,\dots, T\}, \quad \P\big[\widehat{f}_s > 0\,\big|\, \tau = t\big] = 1.
\end{equation}
Moreover, in view of Lemma~\ref{lemma: static numeraire lemma in the absence of numeraire}, we have $\P[X_t = 0\mid \tau = t] = 1$ for all $X \in \mathcal{X}$. As a consequence of the ``non--rebounce'' property (ii) of a wealth process set, one then has for each $t \in \{0,1,\dots, T\}$ that
\begin{equation}\label{eq: after tau}
\forall r \ge t, r\in \{0,1,\dots, T\}, \forall X \in \mathcal{X}, \quad \P[X_r = 0 \,|\,\tau = t] = 1.
\end{equation}
We will heavily take use of these properties of $\tau$ in the following proof of Theorem~\ref{thm:discrete}.
\begin{proof}[Proof of Theorem~\ref{thm:discrete}]
	We start  with the well-known direction $(ii)\Rightarrow (i)$, whose short proof we provide for the sake of completeness.  Suppose that $(ii)$ holds, i.e., there exists a strictly positive generalized supermartingale deflator  $Z$ for $\cX$. Then as $Z_0\leq 1$ by definition and  since $X_0=1$ for all $X \in \cX$, we see that \eqref{eq:generalized supermartingale deflator property} ensures that 
	\begin{equation}\label{eq:deflator-P-bound1}
	\E_{\P}[X_tZ_t]\leq 1.
	\end{equation}
	Therefore, Markov's inequality implies the $\P$-boundedness of the   set $\{X_tZ_t : X_t \in \cC_t\}$, which means that for any $\varepsilon > 0$, there exists an $M>0$ such that for all $X_t\in \cC_t$, it holds that $\P[X_tZ_t \ge M] \le \varepsilon$. In addition, note that
	\begin{equation*}
	\begin{split}
	\P[X_t\geq M^2]
	&= \P\big[X_t\geq M^2; Z_t\geq \tfrac{1}{M}\big] + \P\big[X_t\geq M^2; Z_t< \tfrac{1}{M}\big]\\
	&\leq \P\big[X_tZ_t\geq M\big] + \P\big[Z_t< \tfrac{1}{M}\big].
	\end{split}
	\end{equation*}
	Applying Markov's inequality again together with \eqref{eq:deflator-P-bound1} and the strict positivity of $Z_t$ hence assures that  we can pick $M$ large enough such that both $\P[X_tZ_t \ge M]\le \varepsilon$ and $\P[Z_t \le \frac{1}{M}] \le \varepsilon$ are satisfied. This in turn shows the $\P$-boundedness of the set $\mathcal{C}_t := \{X_t : X \in \mathcal{X} \}$ as desired.

	Now, let us prove the implication $(i) \Rightarrow (ii)$ and hence assume that  NUPBR$_t$ holds for each $t$.
	Then every $\mbox{cl}_\P(\cC_t)$ is convexly compact, which  by Lemma~\ref{lemma: static numeraire lemma in the absence of numeraire} allows us to choose a sequence of maximal elements $\widehat{f}_t \in \mbox{cl}_\P(\cC_t)$, $t \in \{0,1,\dots,T\}$, and can define the random
	time $\tau$ as in \eqref{def: debut time tau}. Without loss of generality, we may assume that all $\{\tau = t\}$, $t \in \{0,1,\dots,T\}$, have positive $\P$-measure (otherwise, we consider the subset $\{t \in \{0,1,\dots,T\} \cup \{\infty\}: \P[\tau = t] > 0\}$), and we introduce the notion $\Q_t[\,\cdot\,]$ to denote the conditional probability $\P[\,\cdot\,|\,\tau = t]$ for $t \in \{0,1,\dots,T\} \cup \{\infty\}$. We divide the proof of the implication $(i) \Rightarrow (ii)$
	into several steps.
	
	\textit{Step~1}: (local supermartingale deflators). 
	In this step we will show that for every $t \in \{0,1,\dots,T\}\cup \{\infty\}$ there is a  strictly positive generalized supermartingale deflator $(Y^{t}_s)_{s\in \{0,1,\dots,T\}}$ 
	with respect to the conditional probability $\Q_t$.
	First, let $t = \infty$. Since $\Q_\infty \ll \P$, the NUPBR$_t$ condition also holds with respect to $\Q_\infty$. In particular, by Lemma~\ref{lemma: closure and closure of solid hull is bounded} the set $\mbox{cl}_{\Q_\infty}(\cC_s)$ is convexly compact as a subset in $L^0_+(\Q_\infty)$ for each $s$.
	For every $s \in \{0,1,\dots,T\}$ we therefore obtain by Lemma~\ref{lemma: static numeraire lemma in the absence of numeraire} that there exists a unique maximal element $\widehat{f}_s^{\Q_{\infty}} \in \mbox{cl}_{\Q_\infty}(\cC_s)$. Moreover, in view of \eqref{eq: before tau}, there is a maximal element $\widehat{f}^{\P}_s \in \mbox{cl}_{\P}(\mathcal{C}_s)$ such that 
		$\P\big[\widehat{f}^{\P}_s > 0\,\big|\, \tau = \infty\big] = 1$. Since $\mbox{cl}_{\P}(\mathcal{C}_s) \subseteq \mbox{cl}_{\Q_\infty}(\mathcal{C}_s)$, we obtain by uniqueness of the maximal element that $\widehat{f}^{\P}_s=\widehat{f}_s^{\Q_{\infty}} \ \Q_{\infty}$-a.s., which we denote by $\widehat{f}_s$, and which satisfies that $\Q_\infty[\widehat{f}_s > 0] = 1$.
	 In other words, under the conditional measure $\Q_\infty$, the fork convex market $\mathcal{X}$ contains a num\'eraire and therefore, by the classical result proved by Kardaras in \cite[Theorem~2.3]{Kardaras.13.2} together with Proposition~\ref{prop:Kardaras-3.2-closure-ok} and Remark~\ref{rem:prop-Kardaras-Q-ok}, there exists a 
	generalized supermartingale deflator $Y^{\infty}$ with the following properties:  $Y^{\infty}$ is strictly positive with respect to $\Q_\infty$, $Y^{\infty}_0 \le 1$, and for all $X \in \mathcal{X}$, $s < r \in  \{0,1,\dots,T\}$, 
	\begin{equation}\label{eq: supermartingale property under tau = infty}
	\E_{\Q_\infty}\Big[\tfrac{X_rY^{\infty}_r}{X_sY^{\infty}_s}\Big| \cF_s\Big] \le 1.
	\end{equation}
	Next we consider $t = T$. Again, from \eqref{eq: before tau} we can conclude that strictly before time $T$ the market $\mathcal{X}$ satisfies the NUPBR$_t$ condition for all $t \in \{0,1,\dots,T-1\}$ and contains a num\'eraire with respect to the conditional measure $\Q_T$. Therefore, there exists 
	a strictly positive generalized supermartingale deflator $(Y^{T}_s)_{s\in \{0,1,\dots,T-1\}}$ defined on $\{0,1,\dots,T-1\}$. Furthermore, in view of \eqref{eq: after tau}, we have $\Q_T[X_T = 0] = 1$ for all $X \in \mathcal{X}$, which implies that $\E_{\Q_T}\big[\frac{X_T}{X_sY^{T}_s}\,\big|\,\cF_s\big] = 0 \le 1$ for all $s < T$. Hence, we can 
	extend $(Y^{T}_s)_{s\in \{0,1,\dots,T-1\}}$ from $\{0,1,\dots,T-1\}$ to $\{0,1,\dots,T\}$ by setting $Y^{T}_T := 1$, and the resulting process $(Y^{T}_s)_{s\in \{0,1,\dots,T\}}$ is 
	a strictly positive generalized supermartingale deflator for $\mathcal{X}$ with respect to $\Q_T$.
	We continue backwards for all $t \in  \{0,1,\dots,T\}$ with this  procedure. Then for every $t \in \{0,1,\dots,T\}$ we get 
	a strictly positive generalized supermartingale deflator $(Y^{t}_s)_{s\in \{0,1,\dots,T\}}$ with respect to $\Q_t$ such that for all $s \ge t$, $Y^{t}_s = 1$ whereas for all $s<t$, $Y^{t}_s$ is constructed as in the classical case by Kardaras \cite[Theorem~2.3]{Kardaras.13.2}, but under the measure $\Q_t$. In particular, for $s<t$, we know from \cite[Theorem~2.3]{Kardaras.13.2} that $Y^{t}_s$ can be obtained by the relation that
	\begin{equation}\label{eq: construction of local supermartingale deflator} 
	\tfrac{1}{Y^{t}_s} \in \mbox{cl}_{\Q_t}(\cC_s) \text{ is the \textit{static deflator} (cf.\ Lemma~\ref{lemma: static deflator}) in }\mbox{cl}_{\Q_t}(\cC_s) \text{ w.r.t.\ } \Q_t.
	\end{equation}  
	Of course, the supermartingale deflator property indeed gives us for any $s < r$  and $X \in \mathcal{X}$ that
	\begin{equation}\label{eq: supermartingale deflator property for every conditional measure}
	\E_{\Q_t}\Big[\tfrac{X_rY^{t}_r}{X_sY^{t}_s}\,\Big|\, \cF_s\Big] \le 1.
	\end{equation}
	
	\textit{Step~2}: (Pasting local deflators to a global one).
	In this step, we glue all $(Y^{t}_s)_{s \in \{0,1,\dots,T\}}$, $t \in \{0,1,\dots,T\} \cup \{\infty\}$, together to obtain 
	a strictly positive generalized supermartingale deflator for $\mathcal{X}$ under $\P$. More precisely, we define the process $(Z_s)_{s\in\{0,1,\dots,T\}}$ by setting for each $s$ 
	\begin{equation}\label{def: supermartingale deflator in finite case}
	Z_s := \sum_{t \in \{0,1,\dots,T\} \cup \{\infty\}} Y^{t}_s \, \mathbbm{1}_{\{\tau = t\}}, 
	\quad \quad s \in \{0,1,\dots,T\}.
	\end{equation}
	Then, for all $s<r \in \{0,1,\dots,T\}$, $X \in \mathcal{X}$, and $A_s \in \cF_s$, by \eqref{eq: supermartingale property under tau = infty}, \eqref{eq: supermartingale deflator property for every conditional measure}, \eqref{def: supermartingale deflator in finite case}, and the definition of $\Q_t[\,\cdot\,] := \P[\,\cdot\,|\,\tau = t]$, we  deduce that
	\begin{align*}
	\E_{\P}\Big[\tfrac{X_rZ_r}{X_sZ_s}\,\mathbbm{1}_{A_s}\Big] &= \sum_{t \in \{0,1,\dots,T\} \cup \{\infty\}} \E_{\Q_t}\Big[\tfrac{X_rZ_r}{X_sZ_s}\,\mathbbm{1}_{A_s}\Big]\P[\tau = t] \\
	&=\sum_{t \in \{0,1,\dots,T\} \cup \{\infty\}} \E_{\Q_t}\Big[\tfrac{X_rY^{t}_r}{X_sY^{t}_s}\,\mathbbm{1}_{A_s}\Big]\P[\tau = t] \\
	&\le \sum_{t \in \{0,1,\dots,T\} \cup \{\infty\}}\Q_t[A_s]\,\P[\tau = t] \\
	&= \P[A_s],
	\end{align*}
	which implies that $\E_{\P}\big[\frac{X_rZ^k_r}{X_sZ^k_s}\mid \cF_s\big] \le 1$.
	Furthermore, since by construction each $Y^{t}$ is strictly positive under $\Q_t$ for all $t \in \{0,1,\dots,T\} \cup \{\infty\}$, using a similar argument as above we can see that for all $s \in \{0,1,\dots,T\}$,
	\begin{align*}
	\P[Z_s > 0] &= \sum_{t \in \{0,1,\dots,T\} \cup \{\infty\}}\Q_t[Z_s > 0]\,\P[\tau = t] \\
	&= \sum_{t \in \{0,1,\dots,T\} \cup \{\infty\}} \Q_t[Y^{t}_s > 0]\,\P[\tau = t] \\
	&= \sum_{t \in \{0,1,\dots,T\} \cup \{\infty\}}\,\P[\tau = t] = 1.
	\end{align*}
	Hence, we conclude that the process $(Z_s)_{s\in\{0,1,\dots,T\}}$ defined in \eqref{def: supermartingale deflator in finite case} is indeed 
	a strictly positive generalized supermartingale deflator for $\mathcal{X}$ under $\P$. This finishes the proof for the non $\F$-adapted case.

	\textit{Step~3:} (Supermartingale deflator (i.e., $\F$-adapted one) for $\F$-adapted market).
	It remains to show that one can construct a  supermartingale deflator (i.e., an $\F$-adapted one) if the market is $\F$-adapted. 
	Using the notations introduced before, 
	we know from \eqref{eq: construction of local supermartingale deflator} that for any $t\in \{0,1,\dots,T\}$ there exists a strictly positive generalized supermartingale deflator $(Y^{t}_s)_{s \in \{0,1,\dots,T\}}$ with respect to the conditional measure $\Q_t$ which satisfies for all  $s<t$ that $\frac{1}{Y^{t}_s} \in \mbox{cl}_{\Q_t}(\mathcal{C}_s)$, whereas $Y^{t}_s = 1$ for all $s \ge t$. Now, for any $s<t$, since the market is $\F$-adapted, we can apply Lemma~\ref{lemma: nice version of optimizer} with $\Q := \Q_t$ and $g := \frac{1}{Y^{t}_s}$ to find an $\cF_s$--measurable $\Q_t$-version of $\frac{1}{Y^{t}_s}$, which we still denote $\frac{1}{Y^{t}_s}$ for the ease of notation. From this construction, 
	we obtain an $\F$--adapted 
	strictly positive generalized supermartingale deflator with respect to $\Q_t$
	which we again denote by $(Y^{t}_s)_{s \in \{0,1,\dots,T\}}$ for the ease of notation. Consequently, the generalized supermartingale property \eqref{eq: supermartingale deflator property for every conditional measure} can be rewritten as the standard supermartingale property, namely for all $X \in \mathcal{X}$ and $r\geq s$
	\begin{equation*}
	\E_{\Q_t}\big[X_rY^{t}_r\big| \cF_s\big] 
	\le
	 X_sY^{t}_s,
		\end{equation*}
	or, equivalently, for all $A_s \in \cF_s$,
	\begin{equation*}
	\E_{\Q_t}\big[X_rY^{t}_r\,\mathbbm{1}_{A_s}\big] 
	\le
	 \E_{\Q_t}\big[X_sY^{t}_s\,\mathbbm{1}_{A_s}\big].
	\end{equation*}
	Now, let $(Z_s)_{s\in\{0,1,\dots,T\}}$ be the process defined in \eqref{def: supermartingale deflator in finite case} and let $(\widetilde Z_s)_{s\in\{0,1,\dots,T\}}$ be the process defined by setting  $\widetilde{Z}_s := \E_{\P}[Z_s\,|\, \cF_s]$ for each $s \in \{0,1,\dots,T\}$. Then we get that  
	\begin{align*}
	\E_{\P}\big[X_r\widetilde{Z}_r\,\mathbbm{1}_{A_s}\big] 
	=
	 \E_{\P}\big[X_rZ_r\,\mathbbm{1}_{A_s}\big] 
	 &=
	  \sum_{t \in \{0,1,\dots,T\} \cup \{\infty\}} \E_{\Q_t}\big[X_rY^{t}_r\,\mathbbm{1}_{A_s}\big]\, \P[\tau = t] \\
	&\le \sum_{t \in \{0,1,\dots,T\} \cup \{\infty\}} \E_{\Q_t}\big[X_sY^{t}_s\,\mathbbm{1}_{A_s}\big] \,\P[\tau = t] \\
	&= \E_{\P}\big[X_sZ_s\,\mathbbm{1}_{A_s}\big] = \E_{\P}\big[X_s\widetilde{Z}_s\,\mathbbm{1}_{A_s}\big].
	\end{align*}
	Since by definition every $\widetilde{Z}_s$ is $\cF_s$--measurable, the above inequality indeed shows that $(\widetilde Z_s)_{s\in\{0,1,\dots,T\}}$ is 
	a strictly positive supermartingale deflator for $\mathcal{X}$ under $\P$. This completes the proof of Theorem~\ref{thm:discrete}.
\end{proof}
\begin{remark}\label{remark: global deflator is also local deflator}
	The construction of the 
	generalized supermartingale deflator $(Z_s)_{s\in\{0,1,\dots,T\}}$ in the proof of Theorem~\ref{thm:discrete}, see \eqref{def: supermartingale deflator in finite case}, shows that $(Z_s)_{s\in\{0,1,\dots,T\}}$ is not only
	 a strictly positive generalized 
	supermartingale deflator under $\P$, but also
	simultaneously under all $\Q_t[\,\cdot\,] := \P[\,\cdot\,|\,\tau = t]$ for $t \in \{0,1,\dots,T\} \cup \{\infty\}$. This observation turns out to be important in the proof of Theorem~\ref{theorem: third main theorem}.
\end{remark}
\subsection{Proof of the main results in continuous time}\label{subsec:proof-continuous}
\subsubsection{Proof of Theorem~\ref{theorem: second main theorem}}\label{subsubsec:second-main-thm}
In this subsubsection we provide the proof of Theorem~\ref{theorem: second main theorem}, which is essentially the same as the one of Theorem~\ref{thm:discrete}.
\begin{proof}[Proof of Theorem~\ref{theorem: second main theorem}]
	First, note that the implication $(ii) \Rightarrow (i)$ follows by the same argument as, e.g., in the proof of Theorem~\ref{thm:discrete}. Hence we now show that the implication $(i) \Rightarrow (ii)$ holds.
	
	To that end, for each $t \in \cA$ we denote $\Q_t[\,\cdot\,]:= \P[\cdot \,|\, \tau = t]$. Since $(i)$ ensures that the NUPBR$_s$ condition holds for each $s\in [0,T]$ under the measure $\P$, and as all $\Q_t$, $t \in \cA$, are absolutely continuous with respect to $\P$, the market $\mathcal{X}$ satisfies the NUPBR$_s$ condition for each $s\in [0,T]$ also under every $\Q_t$, $t \in \cA$.
	Moreover, since $\{\tau = t\} = \{\overline{X}_s > 0, \forall s<t\} \cap \{\overline{X}_t = 0\}$, we have similarly to the discrete-time case \eqref{eq: before tau} and \eqref{eq: after tau}, that for every $t \in \cA$,
	\begin{equation}\label{eq: before tau in continuous time}
	\forall s<t, s \in [0,T], \quad \Q_t [\overline{X}_s >0] = 1,
	\end{equation}
	and 
	\begin{equation}\label{eq: after tau in continuous time}
	\forall s\ge t, s \in [0,T], \forall X \in \mathcal{X}, \quad \Q_t[X_s =0] = 1.
	\end{equation}
	 Next, for each $t \in \cA \cap [0,T]$, in view of \eqref{eq: before tau in continuous time} and \eqref{eq: after tau in continuous time} we have
	\begin{enumerate}
		\item The market $\mathcal{X}$ contains a num\'eraire strictly before time $t$ with respect to $\Q_t$\,;
		\item For all $X \in \mathcal{X}$ and $r \ge t$, $\Q_t[X_r = 0] = 1$.
	\end{enumerate}
	As a consequence, following the same line of arguments as in the discrete-time case in Theorem~\ref{thm:discrete}, there exists 
	a strictly positive, $\P$-c\`adl\`ag generalized supermartingale deflator $Y^t$ associated with $\Q_t$, such that (cf.\ \eqref{eq: construction of local supermartingale deflator}) 
	\begin{equation}\label{def: local deflator in continuous time before t}
	\forall s<t, \quad Y^t_s = 1/\widehat{f}^t_s,
	\end{equation}
	where $ \widehat{f}^t_s \in \mbox{cl}_{{\Q_t}}(\cC_s)$ is the ``static deflator'' (cf.\ Lemma~\ref{lemma: static deflator}) in  $\mbox{cl}_{{\Q_t}}(\cC_s)$ with respect to $\Q_t$, and 
	\begin{equation}\label{def: local deflator in continuous time after t}
	\forall s \ge t, \quad Y^t_s = 1.
	\end{equation}
	Furthermore, if $\infty \in \cA$,
	then by \eqref{eq: before tau in continuous time} we know that under the conditional measure $\Q_\infty$ the market $\mathcal{X}$ contains a num\'eraire, and hence by \cite[Theorem~2.3]{Kardaras.13.2} there exists with respect to $\Q_\infty$
	a strictly positive, $\Q_\infty$-c\`adl\`ag generalized supermartingale deflator $Y^\infty$. 
	Now we define $Z_s := \sum_{t \in \cA}Y^t_s\, \mathbbm{1}_{\{\tau = t\}}$ for $s \in [0,T]$. We claim that $(Z_s)_{s\in [0,T]}$ is a strictly positive, $\P$-c\`adl\`ag generalized supermartingale deflator. Indeed, since for all $t \in \cA$, $Y^t_s$ is strictly positive with respect to $\Q_t$, we  have for any $s\in [0,T]$ that
	\begin{align*}
	\P[Z_s > 0] &= \sum_{t \in \cA} \Q_t[Z_s > 0]\, \P[\tau = t] \\
	&= \sum_{t \in \cA} \Q_t[Y^t_s > 0]\, \P[\tau = t]\\
	&= \sum_{t \in \cA} \P[\tau = t] = 1,
	\end{align*}
	 Using the same argument we can also show that $Z$ is $\P$-c\`adl\`ag and that for all $r <s$ in $[0,1]$, $X \in \mathcal{X}$, and $A_s \in \cF_s$,
	\begin{equation*}
	\E_{\P}\Big[\tfrac{X_rZ_r}{X_sZ_s}\mathbbm{1}_{A_s}\Big] \le \P[A_s],
	\end{equation*}
	see also \emph{Step~2} in the proof of Theorem~\ref{thm:discrete}. This ensures that $Z$ is  
	indeed a strictly positive, $\P$-c\`adl\`ag generalized supermartingale deflator for $\mathcal{X}$ under $\P$. This finishes the proof for the non $\F$-adapted case.
	
For the case where we additionally assume that the market is $\F$-adapted, it remains to show that we can construct a strictly positive, c\`adl\`ag (and not only $\P$-c\`adl\`ag)  supermartingale deflator (i.e., $\F$-adapted one).
To that end, using the notations introduced before, 
 we can now, since the market is $\F$-adapted, apply Lemma~\ref{lemma: nice version of optimizer} to make sure that each $Y^t_s$ defined in  \eqref{def: local deflator in continuous time before t} is $\cF_s$--measurable so that $(Y^t_s)_{s\in [0,T]}$ is 
 a strictly positive, $\P$-c\`adl\`ag supermartingale deflator 
 with respect to $\Q_t$. Consequently, the process $Z_s = \sum_{t \in \cA}Y^t_s\, \mathbbm{1}_{\{\tau = t\}}$, $s\in [0,T]$, is 
 a strictly positive, $\P$-c\`adl\`ag supermartingale deflator with respect to all $\Q_t$. Consider the process $\widetilde{Z}_s := \E_{\P}[Z_s\,|\, \cF_s]$, $s \in [0,T]$. Then proceeding as in the proof of
 Theorem~\ref{thm:discrete} we can show that $X\widetilde{Z}$ satisfies the (usual) supermartingale property under $\P$ for all $X \in \mathcal{X}$. Therefore, it remains to show that $\widetilde{Z}$ admits a c\`adl\`ag and $\F$-adapted $\P$-version. To that end, let $\ov{X}\in \cX$ be the generalized num\'eraire.
   Since $\overline{X}_s\widetilde{Z}_s$ is a nonnegative $\P$-supermartingale, Fatou's lemma together with the supermartingale property guarantees the right-continuity of the map $s\mapsto \E_{\P}[\overline{X}_s\widetilde{Z}_s]$. As $\F$ satisfies the usual conditions, a classical result from probability theory (see, e.g., \cite[Theorem~VI.4, p.69]{DellacherieMeyer.82}) hence ensures that there exists a c\`adl\`ag $\P$-version of $\widetilde{Z}\overline{X}$, say, $S = (S_s)_{s \in [0,T]}$. Now we define a new strictly positive process $Z^\prime = (Z^\prime_s)_{s \in [0,T]}$ by
\begin{equation}\label{eq: regularized version of deflator}
Z^\prime_s := \tfrac{S_s}{\overline{X}_s}\,\mathbbm{1}_{\{\overline{X}_s > 0\}} + \mathbbm{1}_{\{\overline{X}_s = 0\}}, \quad s \in [0,T].
\end{equation}
In view of the property (ii) of a wealth process in Definition~\ref{def:wealth-set}, $\overline{X}_s(\omega) = 0$ implies that $\overline{X}_r(\omega) = 0$ for all $r \ge s$, which in turn implies the c\`adl\`ag property of $s \mapsto \mathbbm{1}_{\{\overline{X}_s = 0\}}$. Therefore, since $S$ and $\overline{X}$ all have c\`adl\`ag paths, also  $Z^\prime$ has c\`adl\`ag paths. Finally, for any $X \in \mathcal{X}$, as $\{X_t > 0\} \subseteq \{\overline{X}_t > 0\}$ holds $\P$--a.s., we have $\P$--a.s. that
\begin{equation*}
X_tZ^\prime_t = X_t \frac{S_t}{\overline{X}_t}\,\mathbbm{1}_{\{\overline{X}_t > 0\}} = X_t\widetilde{Z}_t\,\mathbbm{1}_{\{X_t > 0\}} = X_t\hat{Z}_t.
\end{equation*}
This together with the supermartingale deflator property of $\widetilde{Z}$ shows that $Z^\prime$ is indeed 
a strictly positive, c\`adl\`ag supermartingale deflator for $\mathcal{X}$.
\end{proof}
\subsubsection{Proof of Theorem~\ref{theorem: third main theorem}}\label{subsubsec:third-main-thm}
In this subsubsection we provide the proof of Theorem~\ref{theorem: third main theorem}. 
To that end, let us first argue why without loss of generality we may assume in its proof that both
\begin{equation}\label{eq: assumption, no atoms}
\mathcal{A} = \emptyset
\end{equation}
and $\mathcal{J} = \text{supp}(\cL(\tau)) = [0,T]$, or equivalently,
\begin{equation}\label{eq: assumption, support is the whole interval}
\forall \text{ open interval } J \subseteq [0,T], \quad \cL(\tau)[J] > 0.
\end{equation}
 Indeed, recall that by assumption $\mathcal{J}= \text{supp}(\cL(\tau)) \setminus \cA$ is not empty. Moreover, from the proof of Theorem~\ref{theorem: second main theorem} we have seen that for each $t \in \cA$ one can find a 
 strictly positive, ($\P$-)c\`adl\`ag
 (generalized) supermartingale deflator $Y^t$ for $\mathcal{X}$ with respect to $\P[\,\cdot \,|\,\tau = t]$. Then by pasting them together we can obtain a process $Z^{\cA}$  defined by
\begin{equation*}
Z^{\cA}_s := \sum_{t \in \cA}Y^t_s\, \mathbbm{1}_{\{\tau = t\}}, \quad s \in [0,T],
\end{equation*}
such that $Z^{\cA}$ it is 
a strictly positive, ($\P$-)c\`adl\`ag
(generalized) supermartingale deflator for $\mathcal{X}$ under the conditional measure $\P[\,\cdot\,|\,\tau \in \cA]$. Now, denote by $\mathcal{J}_1,\dots,\mathcal{J}_n$ the connected component of $\mathcal{J}$ . If we can find for each $\mathcal{J}_i$, $i:=1,\dots,n$, 
a strictly positive, ($\P$-)c\`adl\`ag
(generalized) supermartingale deflator $Z^{\mathcal{J}_i}$ 
 under the conditional measure $\P[\,\cdot\,|\, \tau \in \mathcal{J}_i]$, then by the same pasting arguments as in the discrete case the process
 \begin{equation*}
 Z_s:= Z^{\cA}_s\, \mathbbm{1}_{\cA}(s) + \sum_{i=1}^n Z^{\mathcal{J}_i}_s\,\mathbbm{1}_{\mathcal{J}_i}(s), \quad s\in [0,T],
 \end{equation*}
   will be 
   a strictly positive, ($\P$-)c\`adl\`ag
   (generalized) supermartingale deflator with respect to $\P$. Hence, in order to keep the notation short, we may indeed without loss of generality assume for the rest of this subsubsection that \eqref{eq: assumption, no atoms} and \eqref {eq: assumption, support is the whole interval} hold.

Let us first prove the simpler direction $(ii) \Rightarrow (i)$ in Theorem~\ref{theorem: third main theorem}.
\begin{proof}[Proof of $(ii) \Rightarrow (i)$ in Theorem~\ref{theorem: third main theorem} with \eqref{eq: assumption, no atoms} and \eqref{eq: assumption, support is the whole interval}]
	
	By the same argument as, e.g., in the proof of Theorem~\ref{thm:discrete}, we see that  the existence of a strictly positive (generalized) supermartingale deflator $Z$
	implies that NUPBR$_s$ holds for each $s \in [0,T]$.
	
	Now to see that also $(i)(b)$ holds, note that by assumption there exist a countable dense subset $\cD \subseteq [0,T]$ containing $0$ and $T$ and a strictly positive process $Z^\infty$ defined on $\cD$ with $Z^\infty_0 \le 1$ such that  for all $r<u \in \cD$ 
	\begin{equation*}
	\E_{\Q_{r,u}}\Big[\frac{X_tZ^\infty_t}{X_0Z^\infty_0}\Big] \le 1
	\end{equation*}
	for all $X \in\mathcal{X}$ under $\Q_{r,u}$, where we recall that $\Q_{r,u}[\,\cdot\,] = \P\big[\,\cdot\,|\,\tau \in (r,u]\big]$, see Notation~\ref{not:J-A-cond-prob}. Then, since we have $X_0 = 1$, see Definition~\ref{def:wealth-set},  the above inequality implies that $\E_{\Q_{r,u}}[X_tZ^\infty_t] \le 1$. Hence, for any real number $M > 0$, the Markov inequality implies that $\Q_{r,u}[X_tZ^\infty_t \ge M] \le 1/M$. Consequently, we have for all $X \in \mathcal{X}$, $M >0$,
	\begin{align*}
	\Q_{r,u}\big[X_t \ge M\big] 
	&\le 
	\Q_{r,u}\big[X_tZ^\infty_t \ge \sqrt{M}\big] + \Q_{r,u}\big[\tfrac{1}{Z^\infty_t} \ge \sqrt{M}\big] \\
	&\le
	 \tfrac{1}{\sqrt{M}} + \Q_{r,u}\big[\tfrac{1}{Z^\infty_t} \ge \sqrt{M}\big].
	\end{align*}
	Now, invoking \eqref{eq: uniform boundedness of deflator}, for any given $\varepsilon > 0$ we can find $M$ large enough such that $\Q_{r,u}[\tfrac{1}{Z^\infty_t} \ge \sqrt{M}] < \varepsilon$ holds uniformly over all $X \in \mathcal{X}$ and $u>r\ge t$ in $\cD$. Hence, by combining all above estimates we can derive that
	\begin{equation*}
 \limsup_{M \rightarrow \infty}\sup_{r,u \in \mathcal{D}, u >r\ge t}\sup_{X_t \in \mathcal{C}_t}\Q_{r,u}\big[X_t \ge M\big] = 0,
	\end{equation*}
	which is exactly the desired equation~\eqref{eq: uniform boundedness in probability}. This completes the proof of $(ii) \Rightarrow (i)$ in Theorem~\ref{theorem: third main theorem}.
\end{proof}
The most technical part of this paper is to prove the implication $(i) \Rightarrow (ii)$ in Theorem~\ref{theorem: third main theorem}. For the ease of notation, we will  without loss of generality assume that the set $\cD$ in $(i)$ of Theorem~\ref{theorem: third main theorem} satisfies the following:
\begin{equation*}
\mbox{$\cD$ is the set of all dyadic numbers in $[0,T]$}.
\end{equation*}
Moreover, we denote by $\cD_k:=\{\nicefrac{iT}{2^k}\colon i:=0,1,\dots,2^k\}$ the collection of all $k$-th dyadic numbers in $[0,T]$. 

 Indeed, a careful look through the proof shows that the only property we actually use from the  set of dyadic numbers $\cD$ is that it is dense and that there exists an increasing sequence of finite sets $\cD_0\subseteq \cD_1\subseteq\cD_2\subseteq\dots$ with $\{0,T\}\subseteq \cD_i$ for each $i$ satisfying  $\cD = \bigcup_{k \in \N} \cD_k$, which of course can be constructed for any countable dense subset $\cD$ which contains $0$ and $T$.

 Then, for every fixed $k \in \N$ and $r  \in \cD_k\setminus \{T\}$, we use $\Q^k_r[\,\cdot\,]$ to denote the conditional probability
\begin{equation}\label{def: conditional measure in continuous time}
\Q^k_r[\,\cdot\,] := \P\big[\,\cdot\,\big|\, \tau \in (r, r+\tfrac{T}{2^k}]\big].
\end{equation}
Moreover, a crucial object in the proof will be for each $t \in \cD$ the set
\begin{equation}\label{eq:B-t}
B_t := \{k \in \N: t \in \cD_k\}.
\end{equation}
 Then, note  under Assumptions~\eqref{eq: assumption, no atoms} and \eqref{eq: assumption, support is the whole interval}, Equation~\eqref{eq: uniform boundedness in probability} implies that for all $t \in \cD$, 
\begin{equation}\label{eq: (1.2) in dyadic case}
\limsup_{M \rightarrow \infty}\sup_{k \in B_t}\sup_{r \in \cD_k, r\ge t}\sup_{X_t \in \cC_t}\Q^k_r[X_t \ge M] = 0.
\end{equation}
With these preparations,  we are now able to prove the implications $(i) \Rightarrow (ii)$ in Theorem~\ref{theorem: third main theorem}.
Due to its technicality and length, we will divide  the proof of the implication $(i) \Rightarrow (ii)$ in Theorem~\ref{theorem: third main theorem} into five steps.
\begin{proof}[Proof of $(i) \Rightarrow (ii)$ in Theorem~\ref{theorem: third main theorem} with \eqref{eq: assumption, no atoms} and \eqref{eq: assumption, support is the whole interval}]
	First we note that the assumption~\eqref{eq: assumption, no atoms} implies that $\P[\tau = \infty] = 0$, which means that $\P[\overline{X}_T = 0] = 1$. This in turn implies that $\P[X_T=0]=1$ for all $X\in \cX$, meaning that the market has died out at the terminal time $T$. Hence, it suffices to consider the market on the time interval $[0,T)$.

	\textit{Step~1}: (Local deflators on $\cD_k$).
	For each $k$, denote by $\mathcal{X}^k:=\{(X_t)_{t \in \cD_k}: X \in \mathcal{X}\}$ the restriction of the market to the $k$-dyadic grid.
	To show that there exists for each $k$ a deflator $(Z^k_s)_{s\in \cD_k}$ on $\cX^k$ with respect to each $\Q^k_r$, $r\in \cD_k \setminus\{T\}$, we follow the proof of Theorem~\ref{thm:discrete}. 
	For a fixed $k \in \cD_k$ and an $r \in \cD_k \setminus\{T\}$, we see from the definition of $\tau$ in \eqref{eq: tau in continuous time} and the definition of the conditional measure $\Q^k_r$ in \eqref{def: conditional measure in continuous time} that
	\begin{enumerate}
		\item The market $\mathcal{X}$ contains a num\'eraire on $[\![0,r]\!]$ under the conditional measure $\Q^k_r$;
		\item For all $X \in \mathcal{X}$ and $u \ge r+\frac{T}{2^k}$, $\Q^k_r[X_u = 0] = 1$.
	\end{enumerate}
	Since by assumption $(i)$ NUPBR$_r$ holds for all $r$, we can obtain as in the discrete-time case (see Theorem~\ref{thm:discrete}) 
	a 
	strictly positive
	generalized supermartingale deflator $Y^{(k,r)}_s$, $s \in \cD_k$, for $\mathcal{X}^k$ with respect to $\Q^k_r$. In particular, invoking the explicit construction of such $Y^{(k,r)}$, see \eqref{eq: construction of local supermartingale deflator} or \eqref{def: local deflator in continuous time before t} and \eqref{def: local deflator in continuous time after t}, we know that $(Y^{(k,r)}_s)_{s \in \cD_k}$, can be formulated as
	\begin{equation}\label{eq: local deflator for kth dyadic points in continuous time} 
	Y^{(k,r)}_s = \frac{1}{\widehat{f}^{(k,r)}_s}\,\mathbbm{1}_{\{s\le r\}} +  \mathbbm{1}_{\{s>r\}}, \quad s \in \cD_k,
	\end{equation}
	where $\widehat{f}^{(k,r)}_s \in \mbox{cl}_{\Q^k_r}(\cC_s)$ is the \textit{static deflator}  in $\mbox{cl}_{\Q^k_r}(\cC_s)$ with respect to $\Q^k_r$ (cf.\ Lemma~\ref{lemma: static deflator}).
	Now we can paste these ``local deflators'' over all $r \in \cD_k \setminus \{T\}$ as in the discrete case before. More precisely, using \eqref{eq: local deflator for kth dyadic points in continuous time}, we define the process $(Z^k_s)_{s\in \cD_k}$ by setting for each $s\in \cD_k$ 
	\begin{equation}\label{eq: P deflators for kth dyadic points in continuous time}
	Z^k_s 
	:=
	 \sum_{r \in \cD_k, r<T}Y^{(k,r)}_s\, \mathbbm{1}_{\{\tau \in (r, r+ \frac{T}{2^k}]\}} 
	= 
	\sum_{r \in \cD_k, s\le r<T} \frac{1}{\widehat{f}^{(k,r)}_s}\,\mathbbm{1}_{\{\tau \in (r, r+ \frac{T}{2^k}]\}} + \mathbbm{1}_{\{\tau \le s\}}.
	\end{equation}
	Following the arguments of Theorem~\ref{thm:discrete} and Remark~\ref{remark: global deflator is also local deflator}, we conclude  that $(Z^k_s)_{s\in \cD_k}$
 is a strictly positive
 generalized supermartingale deflator for $\cX^k$ with respect to $\P$ and all $\Q^k_r$, $r \in \cD_k \setminus\{T\}$. This  means for all $r \in \cD_k \setminus\{T\}$, $s<t\in\cD_k$, $A_s \in \cF_s$, it holds that
	\begin{equation}\label{eq: supermartingale deflator property globally and locally on Dk}
	\E_{\P}\big[\tfrac{X_tZ^k_t}{X_sZ^k_s}\,\mathbbm{1}_{A_s}\big] \le \P[A_s], \quad \E_{\Q^k_r}\Big[\tfrac{X_tZ^k_t}{X_sZ^k_s}\,\mathbbm{1}_{A_s}\Big] \le \Q^k_r(A_s),
	\end{equation}
	which finishes \textit{Step~1} of the proof.

\textit{Step~2}: (Extension from $\cD_k$ to $\cD$).
In  \textit{Step~1}, we have obtained a sequence $(Z^k)_{k \in \N}$ such that for each $k$, $Z^k$ is a strictly positive 
generalized supermartingale deflator for $\mathcal{X}^k$ under $\P$ and under all $\Q^k_r$, $r \in \cD_k \setminus\{T\}$. Like for the discrete-case we did not need the uniform boundedness assumption~\eqref{eq: uniform boundedness in probability} in the construction of these $Z^k$.
In \textit{Step~2}, we want to construct a process $Z^{\cD}$ defined on $\cD$ based on $(Z^k)_{k \in \N}$ such that $Z^{\cD}$ satisfies certain strictly positive 
generalized supermartingale deflator properties for $\mathcal{X}^{\cD}:= \{(X_t)_{t \in \cD}: X \in \mathcal{X}\}$ with respect to $\P$ and all $\Q^k_r$, $r \in \cD_k \setminus\{T\}$, $k \in \N$, cf. Remark~\ref{rem:main-thm-meaning}. To achieve this goal, the assumption~\eqref{eq: uniform boundedness in probability} is crucial.
To that end, let us first start with some simple observations. Denote for each $k \in \N$ and $r \in \cD_k \setminus \{T\}$ the set
\begin{equation*}
A^k_r:= \big\{\tau \in (r, r+\tfrac{T}{2^k}]\big\}.
\end{equation*}
Now, let $t \in \cD$ and consider $k \in B_t$, where $B_t$ is defined in \eqref{eq:B-t}. For every $r\geq t$ these $A^k_r$, $r \in \cD_k \setminus \{T\}$, are pairwise disjoint, and we also have that $\bigcup_{r \ge t,r \in \cD_k\setminus\{T\}}A^k_r = \{\tau >t\}$ is disjoint from the set $\{\tau \le t\}$. Moreover, for $r \in \cD_k \setminus \{T\}$,  we have that
\begin{equation}\label{eq: a trivial union relation}
A^{k+1}_r \cup A^{k+1}_{r + \frac{T}{2^{k+1}}} = A^k_r.
\end{equation}

\textit{Step~2.1}: (Boundedness of convex combinations of $1/Z^k$).
For each $t \in \cD$, we claim that  the condition~\eqref{eq: uniform boundedness in probability} (or more precisely condition \eqref{eq: (1.2) in dyadic case}) implies that the convex hull of $\frac{1}{Z^k_t}$, $k \in B_t$, denoted by
\begin{equation}\label{eq: forward convex hull of deflators}
K_t := \text{conv}\Big\{\tfrac{1}{Z^k_t}: k \in B_t\Big\}
\end{equation}
is $\P$-bounded. Note that since every $(Z^k)$ is a strictly positive generalized supermartingale deflator for $\cX^k$ under $\P$ and $t\in \cD_k$, we have by definition that $Z^k_t > 0$ holds $\P$--a.s.; in particular, $\frac{1}{Z^k_t}$ is well defined.

Now, in view of \eqref{eq: P deflators for kth dyadic points in continuous time} and the disjointness relation between $A^k_r$, $r\geq t$, and $\{\tau \le t\}$, we  see that
\begin{equation}\label{eq: formula of reciprocal of Zk}
\frac{1}{Z^k_t} = \sum_{r \in \cD_k, t\le r<T} \widehat{f}^{(k,r)}_t\,\mathbbm{1}_{A^k_r} + \mathbbm{1}_{\{\tau \le t\}},
\end{equation}
	where $\widehat{f}^{(k,r)}_t \in \mbox{cl}_{\Q^k_r}(\cC_t)$ is the \textit{static deflator}  in $\mbox{cl}_{\Q^k_r}(\cC_t)$ with respect to $\Q^k_r$ (cf.\ Lemma~\ref{lemma: static deflator}). 
Now we fix an $\varepsilon>0$ and consider $\lambda \frac{1}{Z^k_t} + (1-\lambda)\frac{1}{Z^{k+1}_t}$ for any $\lambda \in [0,1]$. By \eqref{eq: formula of reciprocal of Zk} and \eqref{eq: a trivial union relation} we have
\begin{equation}\label{eq:Step2.1-1}
\begin{split}
&\lambda \frac{1}{Z^k_t} + (1-\lambda)\frac{1}{Z^{k+1}_t}\\
&= 
\sum_{r \in \cD_k, t\le r<T}\Big(\lambda \widehat{f}^{(k,r)}_t + (1-\lambda)\widehat{f}^{(k+1,r)}_t\Big)\,\mathbbm{1}_{A^{k+1}_r}
 \\
& \quad + \sum_{r \in \cD_k, t\le r<T}\Big(\lambda \widehat{f}^{(k,r)}_t + (1-\lambda)\widehat{f}^{(k+1,r+\frac{T}{2^{k+1}})}_t\Big)\,\mathbbm{1}_{A^{k+1}_{r+\frac{T}{2^{k+1}}}} + \mathbbm{1}_{\{\tau \le t\}}.
\end{split}
\end{equation}
Moreover, since $\Q^{k+1}_r \ll \Q^k_r$ and $\Q^{k+1}_{r + \frac{T}{2^{k+1}}}\ll \Q^k_r$ holds for all $ r \in \cD_k\setminus\{T\}$, and $\cC_t$ is convex, we have 
\begin{equation*}
\begin{split}
&\lambda \widehat{f}^{(k,r)}_t + (1-\lambda)\widehat{f}^{(k+1,r)}_t \in \mbox{cl}_{\Q^{k+1}_r}({\cC_t}),
\\
&\lambda \widehat{f}^{(k,r)}_t + (1-\lambda)\widehat{f}^{(k+1,r+\frac{T}{2^{k+1}})}_t \in \mbox{cl}_{\Q^{k+1}_{r + \frac{T}{2^{k+1}}}}({\cC_t}),
\end{split}
\end{equation*}
 which together with Lemma~\ref{lemma: closure and closure of solid hull is bounded} allows us to use the uniform boundedness condition~\eqref{eq: (1.2) in dyadic case} to find an $M>1$ independent of $k \in B_t$, $r \in \cD_k$, and $\lambda \in [0,1]$ such that both
\begin{equation}\label{eq:Step2.1-M}
\begin{split}
\Q^{k+1}_r\Big[\lambda \widehat{f}^{(k,r)}_t + (1-\lambda)\widehat{f}^{(k+1,r)}_t\ge M\Big] &\le \varepsilon,  
\\
\Q^{k+1}_{r+\frac{T}{2^{k+1}}}\Big[\lambda \widehat{f}^{(k,r)}_t + (1-\lambda)\widehat{f}^{(k+1,r+\frac{T}{2^{k+1}})}_t\ge M\Big] &\le \varepsilon.
\end{split}
\end{equation}
Now, denote 
\begin{equation}\label{eq:def-h-k}
\begin{split}
h^{(k+1,r)}_t&:=\lambda \widehat{f}^{(k,r)}_t + (1-\lambda)\widehat{f}^{(k+1,r)}_t,\\ 
h^{(k+1,r+ \frac{T}{2^{k+1}})}_t&:= \lambda \widehat{f}^{(k,r)}_t + (1-\lambda)\widehat{f}^{(k+1,r+\frac{T}{2^{k+1}})}_t.
\end{split}
\end{equation}
Then \eqref{eq:Step2.1-1} and \eqref{eq:Step2.1-M} together with the observation that on $\{\tau\leq t\}$ one has $Z^k_t=Z^{k+1}_t=1$, ensures for such $M>1$   that for all $k \in B_t$
\begin{align*}
\P\big[\lambda\tfrac{1}{Z^k_t} + (1-\lambda)\tfrac{1}{Z^{k+1}_t} \ge M\big]
&= \P\big[\{\lambda\tfrac{1}{Z^k_t} + (1-\lambda)\tfrac{1}{Z^{k+1}_t} \ge M\}\cap \{\tau>t\}\big]\\
& \le \sum_{r \in \cD_k, t\le r<T} \Q^{k+1}_r\big[h^{(k+1,r)}_t\ge M\big] \,\P\big[A^{k+1}_r\big] 
\\&
\quad 
+ \sum_{r \in \cD_k, t\le r<T} \Q^{k+1}_{r + \frac{T}{2^{k+1}}}\big[h^{(k+1,r+ \frac{T}{2^{k+1}})}_t\ge M\big]\, \P\big[A^{k+1}_{r + \frac{T}{2^{k+1}}}\big] \\
&\le \varepsilon  \sum_{r \in \cD_k, t\le r<T}\Big(\P\big[A^{k+1}_r\big] + \P\big[A^{k+1}_{r + \frac{T}{2^{k+1}}}\big]\Big) \\
&\le \varepsilon,
\end{align*}
where for the last inequality we used that 
\begin{equation*}
\sum_{r \in \cD_k, t\le r<T}\Big(\P\big[A^{k+1}_r\big] + \P\big[A^{k+1}_{r + \frac{T}{2^{k+1}}}\big]\Big)
 = 
 \P\bigg[\bigcup_{r \ge t,r \in \cD_k\setminus\{T\}}(A^{k+1}_r \cup A^{k+1}_{r + \frac{T}{2^{k+1}}} )\bigg] \le 1.
\end{equation*}
Since $M>1$ was independent of $k \in B_t$, $r \in \cD_k$, and $\lambda \in [0,1]$, thanks to the uniform boundedness condition~\eqref{eq: (1.2) in dyadic case}, we can show the same result for any convex combination of $\frac{1}{Z^k_t}$ for $k \in B_t$, which gives the $\P$--boundedness of the convex hull $K_t$ defined in \eqref{eq: forward convex hull of deflators}.
	
\textit{Step~2.2}: (An application of Komlos lemma).
Let  $t \in \cD$ and consider the sequence of nonnegative random variable $(Z^k_t)_{k \in B_t}$, where $Z^k$ is the strictly positive 
 generalized supermartingale deflator for $\mathcal{X}^k$ constructed in \textit{Step~1}. We first claim that that the convex set $\text{conv}\big\{Z^k_t: k \in B_t\big\}$ is $\P$-bounded.
Indeed, for all $k \in B_t$, using the generalized supermartingale property (with respect to $\P$), see \eqref{eq: supermartingale deflator property globally and locally on Dk}, we have
\begin{equation*}
\E_{\P}\big[\overline{X}_tZ^k_t\big] \le 1.
\end{equation*}
Moreover, when considering any convex combinations $Y_t:= \lambda_1Z^{k_1}_t + \ldots + \lambda_lZ^{k_l}_t$   of $Z^k_t$'s for $k_1,\dots,k_l \in B_t$, we also have  
that
\begin{equation}\label{eq:2.2-1}
\E_{\P}\big[\overline{X}_t\,Y_t\big] \le 1.
\end{equation}
This in turn implies that the convex hull $\text{conv}\big\{Z^k_t: k \in B_t\big\}$ of all $(Z^k_t)_{k \in B_t}$ is $\P$-bounded on $\{\overline{X}_t > 0\}$. 
On the other hand, from \eqref{eq: P deflators for kth dyadic points in continuous time} we see that $Z^k_t = 1$ on $\{\tau \le t\}$ for all $k \in B_t$. Moreover, $\{\overline{X}_t = 0\} \subseteq \{\tau \le t\}$  holds as $\tau$ is the first hitting time of $\overline{X}$ at $0$.  Therefore, we   have $Z^k_t = 1$ on  $\{\overline{X}_t = 0\}$ for all $k \in B_t$ which ensures that also  $Y_t = 1$ on $\{\overline{X}_t = 0\}$  for any convex combination $Y_t:= \lambda_1Z^{k_1}_t + \ldots + \lambda_lZ^{k_l}_t$   of $Z^k_t$'s for $k_1,\dots,k_l \in B_t$. This means  that $\text{conv}\big\{Z^k_t: k \in B_t\big\}=1$ on $\{\overline{X}_t = 0\}$, in particular it is $\P$-bounded also on $\{\overline{X}_t = 0\}$. Combining the two facts above together we can conclude that the convex set $\text{conv}\big\{Z^k_t: k \in B_t\big\}$ is indeed $\P$-bounded.
Now we can apply Komlos lemma for nonnegative random variables, see \cite[Appendix]{Kabanov97}, to the sequence $(Z^k_t)_{k \in B_t}$ to get a sequence of forward convex combinations of $Z^k_t$, $k \in B_t$, which is denoted by $\text{\textbf{fconv}}Z^k_t$, $k \in B_t$, such that $(\text{\textbf{fconv}}Z^k_t)_{k \in B_t}$ converges to a nonnegative random variable $Z^\infty_t$ $\P$--a.s.\ as $k \rightarrow \infty$. The boundedness of $\text{conv}\big\{Z^k_t: k \in B_t\big\}$ then guarantees that $Z^\infty_t < \infty$ $\P$--a.s.

Moreover, we just showed in \textit{Step~2.1} that the convex hull $K_t$ of reciprocals of all $Z^k_t$, see \eqref{eq: forward convex hull of deflators}, is also $\P$-bounded. Then, since the function $(0,\infty)\ni x \mapsto 1/x$ is convex, we obtain  that
\begin{equation*}
\frac{1}{\lambda_1 Z^k_t + \ldots +\lambda_{l}Z^{k+l-1}_t} \le \lambda_1 \frac{1}{Z^k_t} + \ldots + \lambda_l\frac{1}{Z^{k+l-1}_t}
\end{equation*}
for any convex weights $\lambda_j$, $j=1,\ldots,l$, and consequently the sequence
$\big(\frac{1}{\text{\textbf{fconv}}Z^k_t}\big)_{k \in B_t}$ is also $\P$-bounded. Then since $\frac{1}{\text{\textbf{fconv}}Z^k_t}$ converges to $\frac{1}{Z^\infty_t}$ $\P$--a.s., we have $\frac{1}{Z^\infty_t} < \infty$, or equivalently, $Z^\infty_t > 0$ $\P$--a.s.

We  point out that 
in the above convergent sequence $(\text{\textbf{fconv}}Z^k_t)_{k \in B_t}$, where 
\begin{equation*}
\text{\textbf{fconv}}Z^k_t = \lambda^k_1 Z^k_t + \ldots +\lambda^k_{l}Z^{k+l-1}_t,
\end{equation*}
these convex weights $(\lambda^k_j)_{j:=1,\dots l}$ and the indices $l \in \N$, may depend on $t \in \cD$, and hence may vary when doing the above procedure 
separately for each $t \in \cD$.
However, it turns out in the later part of the proof that we would like to have for each $s,t \in \cD$ joint convex weights (and indices) such that both sequences of forward convex combintations with respect to $(Z^k_s)$ and $(Z^k_t)$ converge.

To see this, we first  consider $\cD_0 := \{0,T\}$. 
By Komlos lemma, we can find a sequence of forward convex combinations of $Z^k_0$, $k \in \N$, which now is denoted by $(\text{\textbf{fconv}}_0Z^k_0)_{k \in \N}$, such that it converges to a strictly positive, finite valued random variable $Z^\infty_0$ $\P$--a.s.. Then we apply Komlos lemma to the sequence $(\text{\textbf{fconv}}_0Z^k_T)_{k \in \N}$, where $(\text{\textbf{fconv}}_0 Z^k_T)_{k \in \N}$ possesses the same forward convex combination form as in the sequence $(\text{\textbf{fconv}}_0Z^k_0)_{k \in \N}$, but with $Z^k_0$ replaced by $Z^k_T$, to get a convergent sequence $(\text{\textbf{fconv}}_TZ^k_T)_{k \in \N}$ which consists of forward convex combinations of $(\text{\textbf{fconv}}_0Z^k_T)_{k \in \N}$, and converges to a strictly positive, finite valued random variable $Z^\infty_T$ $\P$--a.s.. Note that when using the convex combinations appeared in $(\text{\textbf{fconv}}_TZ^k_T)_{k \in \N}$, the sequence $(\text{\textbf{fconv}}_TZ^k_0)_{k \in \N}$ still converges to $Z^\infty_0$ $\P$--a.s.

Suppose now that for some $n \in \N$ we have already found a forward convex combinations form $\text{\textbf{fconv}}_{\frac{(2^n-1)T}{2^n}}$ such that for \textit{all} $t \in \cD_n$, the sequence $(\text{\textbf{fconv}}_{\frac{(2^n-1)T}{2^n}}Z^k_t)_{k \in B_t}$ converges to a strictly positive, finite valued random variable $Z^\infty_t$ $\P$--a.s.\ as $k \rightarrow \infty$. Then we apply Komlos lemma to the sequence $(\text{\textbf{fconv}}_{\frac{(2^n-1)T}{2^n}}Z^k_{\frac{T}{2^{n+1}}})_{k \in B_{\frac{T}{2^{n+1}}}}$ to obtain a forward convex combination subsequence $(\text{\textbf{fconv}}_{\frac{T}{2^{n+1}}}Z^k_{\frac{T}{2^{n+1}}})_{k \in B_{\frac{T}{2^{n+1}}}}$ which is convergent to a strictly positive, finite valued random variable $Z^\infty_{\frac{T}{2^{n+1}}}$ $\P$--a.s.\ as $k \rightarrow \infty$. Notice that the sequence $(\text{\textbf{fconv}}_{\frac{T}{2^{n+1}}}Z^k_t)_{k \in B_t}$ still converges to $Z^\infty_t$ $\P$--a.s. (as $k \rightarrow \infty$) simultaneously for all $t \in \cD_n \cup \{\frac{T}{2^{n+1}}\}$. Repeating this argument for all $t \in \cD_{n+1} \setminus \cD_n$ we can get a forward convex combination form $\text{\textbf{fconv}}_{\frac{(2^{n+1}-1)T}{2^{n+1}}}$ such that  $(\text{\textbf{fconv}}_{\frac{(2^{n+1}-1)T}{2^{n+1}}}Z^k_t)_{k \in B_t}$ converges to $Z^\infty_t$ $\P$--a.s.\ (as $k \rightarrow \infty$) simultaneously for all $t \in \cD_{n+1}$. Hence we can complete this induction proof to see that our claim holds. More precisely, when denoting for any $s,t \in \cD$ 
\begin{equation*}
\begin{split}
m_{s,t}&:= \min\{j\in\N: s \in \cD_j\} \vee \min\{j\in\N: t \in \cD_j\},\\
p(s,t) &:= \frac{(2^{m_{s,t}}-1)T}{2^{m_{s,t}}},
\end{split}
\end{equation*}
 we indeed obtained by the above procedure the desired property that 
\begin{align}\label{eq: cantor diagonal type convergence}
\forall s,t \in \cD,  \quad 
&\lim_{k \rightarrow \infty} (\text{\textbf{fconv}}_{p(s,t)}Z^k_t) = Z^\infty_t 
\quad \mbox{ \textit{and} } \quad \lim_{k \rightarrow \infty} (\text{\textbf{fconv}}_{p(s,t)}Z^k_s) = Z^\infty_s.
\end{align}
We finish \textit{Step~2} by remarking that as each $Z^k$ is a  generalized supermartingale deflator for $\cX^k$ with respect to $\P$, we have by definition that $Z^k_0 \le 1$ for all $k \in \N$ . This, in turn, ensures that
 $Z^\infty_0 \le 1$  holds as well.

	\textit{Step~3:} (
	Generalized supermartingale property (GSP) on $\cD$).
	
	\textit{Step~3.1:} (GSP on $\cD$ for $\P$--expectations).
Our goal in \textit{Step~3.1} is to show that 	
 for all $s<t$ in $\cD$ and for all $X \in \mathcal{X}$,
\begin{equation}\label{eq: P-expectation of discounted processes}
\E_{\P}\Big[\tfrac{X_tZ^\infty_t}{X_sZ^\infty_s}\Big] \le 1.
\end{equation}
	To that end, let $t>s$ in $\cD$, $k \in B_t \cap B_s$ (which means that $s,t \in \cD_k$) be fixed. We first claim that for any $X \in \mathcal{X}$ and $l \ge 1$, 
	\begin{equation}\label{eq: estiamtes for P-expectation}
	\E_{\P}\Big[\tfrac{X_tZ^{k+l}_t}{X_sZ^k_s}\Big] \le 1.
	\end{equation}
	We first consider the case $l = 1$. Note that on the event $\{\tau \le t\}$ we have $X_t = 0$ and hence  by Remark~\ref{remark: well defined condition}, it remains to show that
	\begin{equation}\label{eq: estiamtes for P-expectation-red}
	\E_{\P}\Big[\tfrac{X_tZ^{k+l}_t}{X_sZ^k_s}\,\mathbbm{1}_{\{\tau > t\}}\Big] \le 1.
	\end{equation}
	To that end, in view of the formula \eqref{eq: formula of reciprocal of Zk} for $Z^k$ and $Z^{k+1}$ we have
	\begin{equation}\label{eq: quotient between k+1 and k}
	\frac{X_tZ^{k+1}_t}{X_sZ^k_s}\,\mathbbm{1}_{\{\tau > t\}}
	= 
	\sum_{r \in \cD_k, t\le r<T} \bigg(\tfrac{X_t/\widehat{f}^{(k+1,r)}_t}{X_s/\widehat{f}^{(k,r)}_s}\mathbbm{1}_{A^{k+1}_r} + \tfrac{X_t/\widehat{f}^{(k+1,r+\frac{T}{2^{k+1}})}_t}{X_s/\widehat{f}^{(k,r)}_s}\mathbbm{1}_{A^{k+1}_{r + \frac{T}{2^{k+1}}}}\bigg),
	\end{equation}
	where $\widehat{f}^k_t$, $\widehat{f}^{k+1}_t$ and $\widehat{f}^{(k+1,r+\frac{T}{2^{k+1}})}_t$ are the corresponding \textit{static deflator} for the closure of $\cC_t$ under the measure $\Q^k_r$, $\Q^{k+1}_r$, and $\Q^{k+1}_{r+\frac{T}{2^{k+1}}}$, respectively; the same holds when replacing $t$ by $s$.
	Moreover, by applying Lemma~\ref{lemma: nice version of optimizer}, we choose  each $\widehat{f}^{(j,q)}_t$, $\widehat{f}^{(j,q)}_s$, for $j = k, k+1$, $q = r, r+ \frac{T}{2^{k+1}}$, above to be $\cF_t$ and $\cF_s$--measurable, respectively.
	
	Now, for a fixed $r \in \cD_k$ with $r \ge t$ 
	we observe 
	that
	\begin{align*}
	\E_{\P}\Big[ \tfrac{X_t/\widehat{f}^{(k+1,r)}_t}{X_s/\widehat{f}^{(k,r)}_s}\mathbbm{1}_{A^{k+1}_r} \Big] 
	&= 
	\E_{\Q^{k+1}_r}\Big[\tfrac{X_t/\widehat{f}^{(k+1,r)}_t}{X_s/\widehat{f}^{(k,r)}_s}\Big]\P[A^{k+1}_r] \\
&= \E_{\Q^{k+1}_r}\Big[\tfrac{X_t/\widehat{f}^{(k+1,r)}_t}{X_s/\widehat{f}^{(k+1,r)}_s}\tfrac{\widehat{f}^{(k,r)}_s}{\widehat{f}^{(k+1,r)}_s}\Big]\,\P[A^{k+1}_r]
	\end{align*}
	Recall that in \eqref{eq: supermartingale deflator property globally and locally on Dk} we have shown that $X/\widehat{f}^{(k+1,r)}$ is a generalized supermartingale with respect to $\Q^{k+1}_r$, i.e.,
	\begin{equation}\label{eq:3.1-3.2}
	 \E_{\Q^{k+1}_r}\Big[\tfrac{X_t/\widehat{f}^{(k+1,r)}_t}{X_s/\widehat{f}^{(k+1,r)}_s}\Big|\cF_s\Big] \le 1.
	 \end{equation} So, as $\frac{\widehat{f}^{(k,r)}_s}{\widehat{f}^{(k+1,r)}_s}$ is $\cF_s$--measurable, we have
	\begin{equation}\label{eq:here-meas-necessary!}
	\E_{\Q^{k+1}_r}\Big[\tfrac{X_t/\widehat{f}^{(k+1,r)}_t}{X_s/\widehat{f}^{(k+1,r)}_s}\tfrac{\widehat{f}^{(k,r)}_s}{\widehat{f}^{(k+1,r)}_s}\Big] \le \E_{\Q^{k+1}_r}\Big[\tfrac{\widehat{f}^{(k,r)}_s}{\widehat{f}^{(k+1,r)}_s}\Big].
	\end{equation}
	Also by recalling  that a $\Q^{k+1}_r$-version of $\widehat{f}^{(k+1,r)}_s$ is the \textit{static deflator} for $\mbox{cl}_{\Q^{k+1}_r}(\cC_s)$ with respect to $\Q^{k+1}_r$,  and that a $\Q^{k}_r$-version of $\widehat{f}^{(k,r)}_s$ satisfies that  $\widehat{f}^{(k,r)}_s \in \mbox{cl}_{\Q^{k}_r}(\cC_s) \subseteq \mbox{cl}_{\Q^{k+1}_r}(\cC_s)$, using that $\Q^{k+1}_r \ll \Q^k_r$, we obtain that 
	\begin{equation*}
	\E_{\Q^{k+1}_r}\Big[\tfrac{\widehat{f}^{(k,r)}_s}{\widehat{f}^{(k+1,r)}_s}\Big] \le 1.
	\end{equation*}
	 Hence, from the above estimates we can conclude that
	\begin{equation*}
	\E_{\P}\Big[ \tfrac{X_t/\widehat{f}^{(k+1,r)}_t}{X_s/\widehat{f}^{(k,r)}_s}\,\mathbbm{1}_{A^{k+1}_r} \Big] \le \P\big[A^{k+1}_r\big].
	\end{equation*}
Moreover, by  replacing $r$ with  $r + \frac{T}{2^{k+1}}$, we obtain with the same arguments that
\begin{equation*}
\E_{\P}\bigg[ \tfrac{X_t/\widehat{f}^{(k+1,r + \frac{T}{2^{k+1}})}_t}{X_s/\widehat{f}^{(k,r)}_s}\,\mathbbm{1}_{A^{k+1}_{r + \frac{T}{2^{k+1}}}} \bigg] \le \P\Big[A^{k+1}_{r + \frac{T}{2^{k+1}}}\Big].
\end{equation*}
Combining the above two bounds together with \eqref{eq: quotient between k+1 and k} implies that
	\begin{align*}
	\E_{\P}\Big[\tfrac{X_tZ^{k+1}_t}{X_sZ^k_s}\,\mathbbm{1}_{\{\tau > t\}}\Big] 
	&\le 
	\sum_{r \in \cD_k, t\le r<T} \Big(\P\big[A^{k+1}_r\big] + \P\big[A^{k+1}_{r + \frac{T}{2^{k+1}}}\big]\Big) \\
	&= \sum_{r \in \cD_k, t\le r<T} \P\big[A^k_r\big]\\
	&\le 1,
	\end{align*}
	as we claimed in \eqref{eq: estiamtes for P-expectation-red}. Finally, for $l \ge 1$,  we can  adapt the above proof using the disjoint decomposition of $A^k_r$ into $A^{k+l}_r \cup A^{k+l}_{r + \frac{T}{2^{k+l}}} \ldots \cup A^{k+l}_{r + \frac{(2^l - 1)T}{2^{k+l}}}$. Hence the claim is proved.

	Now for given $s<t$ in $\cD$ we fix $k_0 \in B_t \cap B_s$. Then for every $k \ge k_0$, the bound \eqref{eq: estiamtes for P-expectation} implies that 
	\begin{equation*}
	\E_{\P}\Big[\tfrac{X_t\,\text{\textbf{fconv}}_{p(s,t)}Z^k_t}{X_sZ^{k_0}_s}\Big] \le 1.
	\end{equation*}
	Letting $k \rightarrow \infty$, we obtain by Fatou's lemma that 
	$
	\E_{\P}\Big[\frac{X_tZ^\infty_t}{X_sZ^{k_0}_s}\Big] \le 1.
	$
	Then, by using the convexity of the function $(0,\infty)\ni x \mapsto 1/x$,  we also have that 
	\begin{equation*}
	\E_{\P}\Big[\tfrac{X_tZ^\infty_t}{X_s\text{\textbf{fconv}}_{p(s,t)}Z^{k_0}_s}\Big] \le 1.
	\end{equation*}
	Finally, by letting $k_0 \rightarrow \infty$ we conclude that indeed for all $s<t$ in $\cD$, for all $X \in \mathcal{X}$,
	\begin{equation}\label{eq: P-expectation of discounted processes2}
	\E_{\P}\Big[\tfrac{X_tZ^\infty_t}{X_sZ^\infty_s}\Big] \le 1.
	\end{equation}

	\textit{Step~3.2}: 
	(GSP on $\cD$ for conditional expectations).
	In this step we claim that for any $s<t \in \cD$ and any $X\in \cX$ the bound \eqref{eq: P-expectation of discounted processes2} also holds true for any conditional probability $\Q^{k_0}_q$, $k_0 \in B_t \cap B_s$, $q \in \cD_{k_0}\setminus\{T\}$, namely that
	\begin{equation}\label{eq: Q-expectation of discounted processes}
	\E_{\Q^{k_0}_q}\Big[\tfrac{X_tZ^\infty_t}{X_sZ^\infty_s}\Big] \le 1.
	\end{equation} 
	Indeed, to see this, observe first that if $q \le t$, then $X_t = 0$ under the measure $\Q^{k_0}_q := \P\big[\cdot\,|\, \tau \in (q, q+\frac{T}{2^{k_0}}]\big]$ and therefore  using Remark~\ref{remark: well defined condition} we see that \eqref{eq: Q-expectation of discounted processes} indeed holds in that case. It hence remains to consider the case where $q > t$.
	Following the same line as in \textit{Step~3.1}, we first consider the quantity $\E_{\Q^{k_0}_q} \Big[\frac{X_tZ^{k+1}_t}{X_sZ^k_s} \Big]$ for some $k \ge k_0$ and let
	\begin{equation*}
	 B_{q,k_0,k} := \Big\{r \in \cD_k: r \in (q, q+\tfrac{T}{2^{k_0}} - \tfrac{T}{2^k}]\Big\}.
	 \end{equation*}
	 In view of \eqref{eq: quotient between k+1 and k}, we can use \eqref{eq:3.1-3.2}  derived in the last step together with  the   disjoint decomposition $A^{k_0}_q = \bigcup_{r \in B_{q,k_0,k}} \big[A^{k+1}_r\cup A^{k+1}_{r+\frac{T}{2^{k+1}}}\big]$ to  check that
	\begin{align*}
	&\E_{\Q^{k_0}_q} \Big[\tfrac{X_tZ^{k+1}_t}{X_sZ^k_s} \Big] \\
	&= 
	\sum_{r \in B_{q,k_0,k}}\Bigg( \E_{\Q^{k+1}_r}\Big[\tfrac{X_t/\widehat{f}^{(k+1,r)}_t}{X_s/\widehat{f}^{(k,r)}_s}\Big]\,\tfrac{\P[A^{k+1}_r]}{\P[A^{k_0}_q]} 
	+
	\E_{\Q^{k+1}_{r + \frac{T}{2^{k+1}}}}\Big[\tfrac{X_t/\widehat{f}^{(k+1,r+\frac{1}{2^{k+1}})}_t}{X_s/\widehat{f}^{(k,r)}_s}\Big]
	\,\tfrac{\P[A^{k+1}_{r+\frac{T}{2^{k+1}}}]}{\P[A^{k_0}_q]}\Bigg) \\
	&\le 
	\sum_{r \in B_{q,k_0,k}} \Bigg(\tfrac{\P[A^{k+1}_r]}{\P[A^{k_0}_q]} + \tfrac{\P[A^{k+1}_{r+\frac{T}{2^{k+1}}}]}{\P[A^{k_0}_q]}\Bigg)\\
	& = 1.
	\end{align*}
	Finally, we can apply the same arguments used at the end  of \textit{Step~3.1} to see first for each fixed $k \ge k_0$ that $\E_{\Q^{k_0}_q} \Big[\tfrac{X_tZ^\infty_t}{X_sZ^k_s} \Big]\le 1$, and then by considering forward convex combinations of denominators $Z^k_s$ 
	that indeed the desired inequality \eqref{eq: Q-expectation of discounted processes} holds.\\

	\textit{ Step~4:} (Supermartingale property).
	
	\textit{ Step~4.1:} (Supermartingale property on $\cD$).
	Recall that by construction, see \eqref{eq: supermartingale deflator property globally and locally on Dk}, we have for every $k \in \N$ that the process $(Z^k_t)_{t\in \cD_k}$ is a strictly positive
	generalized supermartingale deflator for $\mathcal{X}^k$ with respect to $\P$. Consequently, by the same arguments as in the proof of Theorem~\ref{thm:discrete}, 
	we see that the process
	\begin{equation}\label{eq: adapted version of local deflators}
	\widetilde{Z}^k_t := \E_{\P}\big[Z^k_t \,\big|\, \cF_t\big], \quad t \in \cD_k,
	\end{equation}
	is a strictly positive 
	supermartingale deflator for $\mathcal{X}^k$ under the measure $\P$. In particular, it holds  for all $s<t \in \cD_k$ and $X \in \mathcal{X}$ that
	\begin{equation*}
	\E_{\P}\big[X_t\widetilde{Z}^k_t\,\big|\, \cF_s\big] \le X_s\widetilde{Z}^k_s.
	\end{equation*}
	Now fix $s<t$ in $\cD$. Then, observe that the above supermartingale property implies 
	for any $k \in B_s \cap B_t$ that
	\begin{equation}\label{eq:Step-4-1}
	\E_{\P}\Big[X_t\,\text{\textbf{fconv}}_{p(s,t)}\widetilde{Z}^k_t\,\Big|\, \cF_s\Big] \le X_s\,\text{\textbf{fconv}}_{p(s,t)}\widetilde{Z}^k_s.
	\end{equation}
	In addition, note that since the two sequences 
	\begin{equation*}
	\big(\text{\textbf{fconv}}_{p(s,t)}\widetilde{Z}^k_t\big)_{k \in B_s \cap B_t}
	 \quad \mbox{ and } \quad 
	\big(\text{\textbf{fconv}}_{p(s,t)}\widetilde{Z}^k_s\big)_{k \in B_s \cap B_t}
	\end{equation*}
	consists 
	of nonnegative random variables, we can find by Komlos lemma subsequences of forward convex combinations consisting of members from $\big(\text{\textbf{fconv}}_{p(s,t)}\widetilde{Z}^k_t\big)_{k \in B_s \cap B_t}$ and $\big(\text{\textbf{fconv}}_{p(s,t)}\widetilde{Z}^k_s\big)_{k \in B_s \cap B_t}$ 
	which we denote by 
	\begin{equation*}
	\big(\text{\textbf{fconv}}^{(2)}_{p(s,t)}\widetilde{Z}^k_t\big)_{k \in B_s \cap B_t}
	 \quad \mbox{ and } \quad 
	  \big(\text{\textbf{fconv}}^{(2)}_{p(s,t)}\widetilde{Z}^k_s\big)_{k \in B_s \cap B_t},
	\end{equation*}
	such that they converge to some nonnegative random variables $\widetilde{Z}^\infty_t$ and $\widetilde{Z}^\infty_s$ $\P$--a.s., respectively. Moreover, note that by Fatou's lemma and \eqref{eq:Step-4-1}, the supermartingale property is preserved by this limiting procedure, as 
	\begin{align*}
	\E_{\P}\Big[X_t\widetilde{Z}^\infty_t\Big| \cF_s\Big] 
	&\le
	 \liminf_{k \rightarrow \infty}\E_{\P}\Big[X_t\,\text{\textbf{fconv}}^{(2)}_{p(s,t)}\widetilde{Z}^k_t\,\Big|\, \cF_s\Big] \\
	&\le
	 \liminf_{k \rightarrow \infty}X_s\,\text{\textbf{fconv}}^{(2)}_{p(s,t)}\widetilde{Z}^k_s\\
	&= 
	X_s\widetilde{Z}^\infty_s.
	\end{align*}
	Therefore, as $s<t\in \cD$ was arbitrary, it remains to show that $\widetilde{Z}^\infty_s$ and $\widetilde{Z}^\infty_t$ are finite and strictly positive to conclude that $\widetilde{Z}^\infty$ is a strictly positive supermartingale deflator for $\cX^D$ on $\cD$. We focus on time $t$ as for $s<t$ the argument is the same.
	
	To see this, observe that since $\widetilde{Z}^k_t\, := \E_{\P}[Z^k_t \,|\, \cF_t]$, 
	we  have that
	\begin{equation*}
	\text{\textbf{fconv}}^{(2)}_{p(s,t)}\widetilde{Z}^k_t
	 =
	  \E_{\P}\Big[\text{\textbf{fconv}}^{(2)}_{p(s,t)}Z^k_t\,\Big|\, \cF_t\Big].
	\end{equation*}
	Moreover, recall that in \eqref{eq: cantor diagonal type convergence} of \textit{Step~2.2} we have found  $Z^\infty_t$ 
	which 
	is
	finite and strictly positive 
	such that  	$\P$--a.s.,
	\begin{equation*}
	Z^\infty_t = \lim_{k \rightarrow \infty} \text{\textbf{fconv}}_{p(s,t)}Z^k_t.
	\end{equation*}
	This, as the $\P$--almost surely type convergence is preserved by forward convex combinations, ensures that 	$\P$--a.s., also
	\begin{equation*}
	Z^\infty_t = \lim_{k \rightarrow \infty} \text{\textbf{fconv}}^{(2)}_{p(s,t)}Z^k_t.
\end{equation*}
	 As a consequence,  we have by Fatou's lemma that
	\begin{align*}
	\E_{\P}\Big[Z^\infty_t\Big| \cF_t\Big]
	 &= 
	 \E_{\P}\Big[ \lim_{k \rightarrow \infty}\, \text{\textbf{fconv}}^{(2)}_{p(s,t)}Z^k_t \,\Big|\,  \cF_t\Big] \\
	& \le
	 \liminf_{k \rightarrow \infty} \E_{\P}\Big[\text{\textbf{fconv}}^{(2)}_{p(s,t)} Z^k_t \, \Big| \,  \cF_t\Big] \\
	&=
	\liminf_{k \rightarrow \infty} \text{\textbf{fconv}}^{(2)}_{p(s,t)}\widetilde{Z}^k_t\\
	&= \widetilde{Z}^\infty_t.
	\end{align*}
	Since from \textit{Step~2.2} we know that $Z^\infty_t$ 
	is strictly positive and so 
	is its conditional expectation $\E_{\P}[Z^\infty_t\,|\, \cF_t]$, 
	we can conclude  thanks to the above inequality that indeed  $\widetilde{Z}^\infty_t$ 
	is also strictly positive. Finally, to see that $\widetilde{Z}_t$ is finite, note that on $\{\overline{X}_t = 0\}\subseteq \{\tau\leq t\}$, we have $Z^k_t=1$ for each $k$, see \eqref{eq: P deflators for kth dyadic points in continuous time}, which ensures that both $Z^\infty_t=1$ and $\widetilde{Z}^\infty_t=1$ on $\{\overline{X}_t = 0\}$.  Moreover, since $\E_{\P}[\overline{X}_t\widetilde{Z}^\infty_t] \le 1$ holds, 
	 we have $\widetilde{Z}^\infty_t < \infty$ also on $\{\overline{X}_t > 0\}$. This in turn shows that indeed $\widetilde{Z}_t$ is finite and we can conclude that indeed $(\widetilde{Z}^\infty_t)_{t\in \cD}$ is a strictly positive
	  supermartingale deflator for $\mathcal{X}^{\cD}:= \{(X_t)_{t \in \cD}: X \in \mathcal{X}\}$ with respect to $\P$. 
	  
	  \textit{ Step~4.2:} (Supermartingale property on [0,T]).
	  The extend $(\widetilde{Z}^\infty_t)_{t\in \cD}$ from $\cD$ to $[0,T]$ note that from the above step, we know that $S:=\overline{X}\widetilde{Z}^\infty$ is nonnegative supermartingale on $\cD$, hence by classical results, see for example \cite[Theorem~VI.2, p.67]{DellacherieMeyer.82}, we can extend $S$ from $\cD$ to $[0,T]$. Next, we can apply the same argument as in the proof of Theorem~\ref{theorem: second main theorem}, see \eqref{eq: regularized version of deflator}, to indeed obtain a strictly positive adapted c\`adl\`ag process $(Z_t)_{t \in [0,T]}$ such that $XZ$ is a supermartingale under $\P$ for all $X \in \mathcal{X}$.

	\textit{Step~5:} (A uniform bound for $1/Z^\infty$).
	In the last step we show the  uniform bound  \eqref{eq: uniform boundedness of deflator}  for $\frac{1}{Z^\infty_t}$ (for the dyadic case, the general case goes analogously), namely that for each $t \in \cD$ we have that 
	\begin{equation}\label{eq: uniform boundedness of deflator-dyadic}
	\lim_{M \to \infty} \sup_{k_0 \in B_t}\sup_{r \in \cD_{k_0}} \Q_r^{k}\big[\tfrac{1}{Z^\infty_t} \ge M\big] = 0.
	\end{equation}
	
	To that end, fix $t \in \cD$ and let $k_0 \in B_t$, $r \in \cD_{k_0}$.
	Now, let $(\text{\textbf{fconv}}_{t}Z^k_t)_{k \in B_t}$ be any sequence of forward convex combinations which converges to $Z^\infty_t$ $\P$--a.s., see also \textit{Step~2.2} for its existence, and denote 
	\begin{equation*}
	\text{\textbf{fconv}}_{t}Z^k_t := \lambda_1 Z^k_t + \ldots \lambda_{l+1}Z^{k+l}_t
	\end{equation*}
	where  $\lambda_1, \ldots, \lambda_{l+1}$ are the corresponding convex weights. 
	Invoking the concrete formula \eqref{eq: P deflators for kth dyadic points in continuous time} of each $Z^k_t$ and using a similar argument as in \textit{Step~2.1}, see \eqref{eq:Step2.1-1} and \eqref{eq:def-h-k}, but with respect to the following partition  $A^{k_0}_r=\cup_{i=0}^{(2^{k-k_0}-1)}\cup_{j=0}^{(2^{l}-1)} A^{k+l}_{r+\frac{iT}{2^{k}}+\frac{jT}{2^{k+l}}}$ of $A^{k_0}_r$, we see that
	\begin{equation*}
	\mathbbm{1}_{A^{k_0}_r}\,\frac{1}{\text{\textbf{fconv}}_{t} Z^k_t} 
	= 
	\sum_{i=0}^{2^{k-k_0}-1} \,\sum_{j=0}^{2^l-1} h^{(k+l,r+\frac{iT}{2^{k}}+\frac{jT}{2^{k+l}})}_t\,\mathbbm{1}_{A^{k+l}_{r+\frac{iT}{2^{k}}+\frac{jT}{2^{k+l}}}},
\end{equation*}
	where every $h^{(k+l,r+\frac{iT}{2^{k}}+\frac{jT}{2^{k+l}})}_t$ belongs to $\mbox{cl}_{\Q^{k+l}_{r +\frac{iT}{2^{k}}+ \frac{jT}{2^{k+l}}}}(\cC_t)$.
	Therefore, we deduce from the above identity that  for any $M>0$
	\begin{align*}
	&\Q^{k_0}_r\Big[\tfrac{1}{\text{\textbf{fconv}}_{t} Z^k_t} \ge M\Big] \\
	&=
	 \P\Big[\tfrac{1}{\text{\textbf{fconv}}_{t} Z^k_t} \ge M; A^{k_0}_r\Big]\frac{1}{\P[A^{k_0}_r]} \\
	&= 
	\sum_{i=0}^{2^{k-k_0}-1} \,\sum_{j=0}^{2^l-1} 
	\Q^{k+l}_{r +\frac{iT}{2^{k}}+ \frac{jT}{2^{k+l}}}\Big[h^{(k+l,r+\frac{iT}{2^{k}}+\frac{jT}{2^{k+l}})}_t\ge M\Big]\,\tfrac{\P\big[A^{k+l}_{r+\frac{iT}{2^{k}} + \frac{jT}{2^{k+l}}}\big]}{\P[A^{k_0}_r]}.
	\end{align*}
	Now, the uniform boundedness property \eqref{eq: uniform boundedness in probability} ensures for any  given $\varepsilon > 0$ that there exists an $M > 0$ such that for all $j, k,l$ and $r$ we have that
	\begin{equation*}
	 \Q^{k+l}_{r+\frac{iT}{2^{k}} + \frac{jT}{2^{k+l}}}\Big[h^{(k+l,r+\frac{iT}{2^{k}}+\frac{jT}{2^{k+l}})}_t\ge M\Big] \le \varepsilon.
	 \end{equation*} 
	 Combining this with the above equation ensures that for any such $M>0$
	\begin{equation*}
\Q^{k_0}_r\Big[\tfrac{1}{\text{\textbf{fconv}}_{t} Z^k_t} \ge M\Big]
 \le 
 \varepsilon 
 \sum_{i=0}^{2^{k-k_0}-1} \,\sum_{j=0}^{2^l-1} 
 \tfrac{\P\big[A^{k+l}_{r+\frac{iT}{2^{k}} + \frac{jT}{2^{k+l}}}\big]}{\P[A^{k_0}_r]} = \varepsilon.
	\end{equation*}
	By passing to the limit, we hence also get $\Q^{k_0}_r\Big[\frac{1}{Z^\infty_t} \ge M\Big] \le \varepsilon$, which in turn indeed implies \eqref{eq: uniform boundedness of deflator-dyadic}  and finishes the proof.
	\end{proof}
\begin{remark}\label{remark: convex combination of generalized supermartingale is not a generalized supermartingale}
	Compared to supermartingale deflators, one cannot expect that the convex combination of two generalized supermartingale deflators is again a generalized supermartingale deflator. Therefore, we could not directly construct $Z^\infty$ from convex combinations of 
	generalized supermartingale deflators on $\cD_k$ to obtain  a
	generalized supermartingale deflator on $\cD$. 
\end{remark}

\subsubsection{Proof of Theorem~\ref{theorem: independent clock}}\label{subsubsec:independent-clock-thm}
In this subsubsection we provide the proof of Theorem~\ref{theorem: independent clock}~and hence all the corresponding assumptions are in force.
\begin{proof}
	The implication $(ii) \Rightarrow (i)$ follows by the same argument as in the proofs presented before, hence it remains to show the implication $(i) \Rightarrow (ii)$.
	To that end, let $\cD$ consists of all the dyadic numbers on $[0,T]$, for each $k\in \N$ let $\cD_k:=\{\nicefrac{iT}{2^k}\colon i:=0,1,\dots,2^k\}$ be the collection of all $k$-th dyadic numbers in $[0,T]$,
		and 
		let $\mathcal{X}^k:=\{(X_t)_{t \in \cD_k}: X \in \mathcal{X}\}$ be the restriction of the market to the $k$-dyadic grid.
	In the spirit of the last subsubsection, we define 
	for any $r \in \cD_k\setminus\{T\}$ 
	here 
	\begin{equation*}
		\Q^k_r[\,\cdot\,] := \P\big[\cdot \,\big| \,\widetilde\tau \in (r, r+\tfrac{T}{2^k}]\big].
	\end{equation*}
	
	Now observe first that
	thanks to the property of $\widetilde\tau$, the market $\mathcal{X}$ contains a num\'eraire under $\Q^k_r$ until time $r$ and all elements in $\mathcal{X}$ vanish after time $r + \frac{T}{2^k}$. Hence, using exactly the same argument as in the  proof of \textit{Step~1} in Theorem~\ref{theorem: third main theorem}, we know that for each $k$ 
	the process $(Z^k_t)_{t \in \cD_k}$ defined by 
	\begin{equation*}
		Z^k_t = \sum_{r \in \cD_k, t\le r<T} \frac{1}{\widehat{f}^{(k,r)}_t}\,\mathbbm{1}_{\{\widetilde\tau \in (r, r+ \frac{T}{2^k}]\}} + \mathbbm{1}_{\{\widetilde\tau \le t\}}, \quad t \in \cD_k,
	\end{equation*}
	is a strictly positive generalized supermartingale deflator for $\mathcal{X}^k$ defined on $\cD_k$, where by using Lemma~\ref{lemma: nice version of optimizer} we pick $\widehat{f}^{(k,r)}_t$
	to be an $\cF_t$--measurable $\Q^k_r$-version of the \textit{static deflator}  of $\mbox{cl}_{\Q^k_r}(\cC_t)$.
	We  claim that for each $t\in \cD$ the set
	\begin{equation*}
		\mathcal K_t := \text{conv}\Big\{\tfrac{1}{\E_{\P}[Z^k_t |\cF_t]}: k \in B_t\Big\}
	\end{equation*}
	(recall that $B_t := \{k \in \N: t \in \cD_k\}$) is $\P$-bounded. 
	
	Indeed, to see this, note that for each $k\in B_t$ the identity
	\begin{equation*}
		\frac{1}{Z^k_t} =  \sum_{r \in \cD_k, t\le r<T} \widehat{f}^{(k,r)}_t\,\mathbbm{1}_{\{\widetilde\tau \in (r, r+ \frac{T}{2^k}]\}} + \mathbbm{1}_{\{\widetilde\tau \le t\}},
	\end{equation*}
	Jensen's inequality, and the $\cF_t$--measurability of every $\widehat{f}^{(k,r)}_t$ imply that 
	\begin{equation*}
		\frac{1}{\E_{\P}[Z^k_t \,|\,\cF_t]} \le \E_{\P}\big[\tfrac{1}{Z^k_t} \,\big|\,\cF_t\big] \le \sum_{r \in \cD_k, t\le r<T} \widehat{f}^{(k,r)}_t\,\E_{\P}\big[\mathbbm{1}_{\{\widetilde\tau \in (r, r+ \frac{T}{2^k}]\}}\, \big|\,\cF_t\big] + 1.
	\end{equation*}
	In addition, as by assumption 
	the event $\big\{\widetilde\tau \in (r, r+ \tfrac{T}{2^k}]\big\}$ is independent of $\cF_r$,
	we have for all $r \in \cD_k \setminus\{T\}$ with $r \geq t$ that
	\begin{equation*}
		\E_{\P}\Big[\mathbbm{1}_{\{\widetilde\tau \in (r, r+ \tfrac{T}{2^k}]\}}\,\Big|\,\cF_t\Big] 
		= 
		\P\big[\widetilde\tau \in (r, r+ \tfrac{T}{2^k}]\big],
	\end{equation*}
	which implies that
	\begin{equation}\label{eq:5-1}
		\frac{1}{\E_{\P}[Z^k_t |\cF_t]} \le \sum_{r \in \cD_k, t\le r<T} \widehat{f}^{(k,r)}_t \,\P\big[\widetilde\tau \in (r, r+ \tfrac{T}{2^k}]\big] + 1.
	\end{equation}
	Moreover, note that from Lemma~\ref{lemma: nice version of optimizer} 
	we know that $\widehat{f}^{(k,r)}_t$ can be taken from the $\P$--closure of the $\P$-solid hull of $\cC_t$. Furthermore, since $\cC_t$ is convex, also its $\P$-solid hull and the $\P$-closure of its $\P$-solid hull are convex.
	This together with the estimate  $\sum_{r \in \cD_k, t\le r<T} \P\big[\widetilde\tau \in (r, r+ \frac{T}{2^k}]\big] \le \P[\widetilde\tau > t] \le 1$ shows that for each $k\in B_t$ the term
	\begin{equation*}
		\sum_{r \in \cD_k, t\le r<T} \widehat{f}^{(k,r)}_t \,\P\big[\widetilde\tau \in (r, r+ \tfrac{T}{2^k}]\big]
	\end{equation*}
	belongs to the $\P$-closure of the $\P$-solid hull of $\cC_t$. 
	Moreover, recall that the $\P$-closure of the $\P$-solid hull of $\cC_t$ is $\P$-bounded thanks to the $\P$-boundedness $\cC_t$, see Lemma~\ref{lemma: closure and closure of solid hull is bounded}, as by assumption $(i)$ NUPBR$_t$ holds for each $t$.
	Therefore, we can conclude from \eqref{eq:5-1} that indeed, the set
	\begin{equation*}
		\mathcal K_t := \text{conv}\Big\{\tfrac{1}{\E_{\P}[Z^k_t |\cF_t]}: k \in B_t\Big\}
	\end{equation*}
	is $\P$-bounded.
	
	Next for each $t\in \cD$ we apply Komlos lemma for the sequence $\E_{\P}[Z^k_t \,|\,\cF_t]$, $k \in B_t$, to obtain a nonnegative limit $Z_t$. Note that by the same argument as at the beginning of \textit{Step~2.2} in the proof of Theorem~\ref{theorem: third main theorem}, we see that the sequence $\E_{\P}[Z^k_t \,|\,\cF_t]$, $k \in B_t$, is $\P$-bounded, and hence $Z_t<\infty\ \P$-a.s.. Moreover, the above derived $\P$-boundedness of $\cK_t$ ensures that $\frac{1}{Z_t}$ is also finite $\P$--a.s., which means that $Z_t$ is strictly positive. Since for every $k$, we have that $\E_{\P}[Z^k_t\,|\, \cF_t]$, $t \in \cD_k$, is a 
	strictly positive
	supermartingale deflator, so is any forward convex combination.
	This in turn ensures that $(Z_t)_{t\in \cD}$ is 
	a strictly positive supermartingale deflator for $\mathcal{X}^{\cD}$ under $\P$ on all dyadic numbers. Then, the same argument as in \textit{Step~4.2} in the proof of Theorem~\ref{theorem: third main theorem} allows us to extend $(Z_t)$ from $\cD$ to  $[0,T]$ such that it becomes a
	strictly positive, c\`adl\`ag supermartingale deflator on $[0,T]$.
\end{proof}
\section{Appendix}
Let us provide the proof of the statement in Remark~\ref{remark: compare with the classical fork convexity} and Lemma~\ref{le:remark: compare with the classical fork convexity} that
in the presence of a num\'eraire, the property for a market to satisfy NUPBR does not depend on the choice of the definition of the fork-convexity (between the one of {\v{Z}}itkovi{\'c} \cite{Zitkovic.02} and ours).
\begin{proof}[Proof of Lemma~\ref{le:remark: compare with the classical fork convexity}]
 First, if the fork convex hull of the market $\mathcal{X}$ satisfies NUPBR, then so does $\mathcal{X}$ being a subset of its hull. On the other hand, suppose that $\mathcal{X}$ satisfies the NUPBR condition. We need to show that the fork-convex hull taken with respect to our notion also satisfies NUPBR. Note that by classical arguments (see, e.g., the beginning of the proof of Theorem~\ref{thm:discrete}) it suffices to prove the existence a strictly positive {(generalized) supermartingale deflator} for the fork-convex hull to guarantee that it satisfies NUPBR. 
 
 To that end, observe that since by assumption $\cX$ is $\mathbb{F}$--adapted and is fork convex in the sense of {\v{Z}}itkovi{\'c} \cite{Zitkovic.02} which is also used in Kardaras \cite{Kardaras.13.2},  we can apply his result \cite[Theorem~2.3]{Kardaras.13.2}, to guarantee 
 the existence of a strictly positive supermartingale deflator $Z$ for $\mathcal{X}$. We claim that $Z$ is also a strictly positive  supermatingale deflator with respect to the fork-convex hull of $\cX$. To see this, let   $X^1, X^2, X^3 \in \mathcal{X}$ and let $X$ be defined as in \eqref{eq: fork convexity in no numeraire case}. Then we have for $t \ge s$ and $A \in \cF_s$ that
\begin{align*}
\E_{\P}\Big[\tfrac{X_tZ_t}{X_sZ_s}\Big| \cF_s\Big]
 &= \E_{\P}\bigg[\tfrac{\mathbbm{1}_A\big[\big(\frac{X^2_t}{X^2_s}X^1_s\big)\,\mathbbm{1}_{\{X^2_s>0\}} + X^1_t\,\mathbbm{1}_{\{X^2_s=0\}}\big]Z_t}{X^1_sZ_s}\,\bigg|\, \cF_s\bigg] \\
& \quad
+ \E_{\P}\bigg[\frac{\mathbbm{1}_{A^c}\big[\big(\frac{X^3_t}{X^3_s}X^1_s\big)\,\mathbbm{1}_{\{X^3_s>0\}} + X^1_t\,\mathbbm{1}_{\{X^3_s=0\}}\big]Z_t}{X^1_sZ_s}\,\bigg|\, \cF_s\bigg].
\end{align*}
For the first term on the right-hand-side of the above equation, we can  use the facts that $X^iZ$ are generalized supermartingales for $i=1,2$, and  that $\{X^2_s \ge 0\}$, $\{X^2_s = 0\}$ are $\cF_s$--measurable to obtain that
\begin{align*}
&\E_{\P}\bigg[\tfrac{\mathbbm{1}_A\big[\big(\frac{X^2_t}{X^2_s}X^1_s\big)\,\mathbbm{1}_{\{X^2_s>0\}} + X^1_t\,\mathbbm{1}_{\{X^2_s=0\}}\big]Z_t}{X^1_sZ_s}\,\bigg|\, \cF_s\bigg]\\
 & = \E_{\P}\Big[\tfrac{X^2_tZ_t}{X^2_sZ_s}\Big| \cF_s\Big]\,\mathbbm{1}_A \, \mathbbm{1}_{\{X^2_s>0\}} 
+ \E_{\P}\Big[\tfrac{X^1_tZ_t}{X^1_sZ_s}\Big| \cF_s\Big]\,\mathbbm{1}_A\,\mathbbm{1}_{\{X^2_s=0\}} \\
&\le \mathbbm{1}_A \, \mathbbm{1}_{\{X^2_s>0\}}
 + \mathbbm{1}_A \, \mathbbm{1}_{\{X^2_s=0\}}\\
& = \mathbbm{1}_A.
\end{align*}
Similarly, we can show that the second term on the right-hand-side of the above equation is bounded by $\mathbbm{1}_{A^c}$, and hence we get
\begin{equation*}
\E_{\P}\Big[\tfrac{X_tZ_t}{X_sZ_s}\Big| \cF_s\Big] \le 1.
\end{equation*}
This in turn, ensures that $Z$ is indeed a strictly positive supermartingale deflator for the fork convex hull of $\cX$, which  is sufficient to guarantee that the fork convex hull of $\cX$ satisfies NUPBR.  {For more details we refer readers to the proof of \cite[Proposition~3]{Zitkovic.02}}.
\end{proof}
\vspace{2em}
\noindent
{\bf Acknowledgment:}
The authors would like to thank D\'aniel B\'alint and Josef Teichmann for fruitful discussions.\\
{The second author gratefully acknowledges the financial support by the SNSF Grant P2EZP2$\_$188068.}\\
{The third author gratefully acknowledges the financial support by his Nanyang Assistant Professorship Grant (NAP Grant) \textit{Machine Learning based Algorithms in Finance and Insurance}.}

\newpage
\newcommand{\dummy}[1]{}

\bibliographystyle{plain}

\end{document}